 \algnewcommand{\IfThenElse}[3]{
 	\State \algorithmicif\ #1\ \algorithmicthen\ #2\ \algorithmicelse\ #3}
\DeclareMathOperator*{\argmax}{arg\,max}
\newcommand{\OSD}{\operatorname{OSD}}
\newcommand{\BPOSD}{MBP$_4$+OSD$_4$}
\let\emph\textit
\theoremstyle{definition}		
\newtheorem{definitionenv}{Definition}
\newtheorem{lemmaenv}[definitionenv]{Lemma}
\newtheorem{theoremenv}[definitionenv]{Theorem}
\newtheorem{corollaryenv}[definitionenv]{Corollary}
\newtheorem{propositionenv}[definitionenv]{Proposition}
\newtheorem{conjectureenv}[definitionenv]{Conjecture}
\newtheorem{remarkenv}[definitionenv]{Remark}
\newenvironment{remark}{\begin{remarkenv}\rm}{\end{remarkenv}}
\newcommand{\br}{\begin{remark}}
	\newcommand{\er}{\end{remark}}
\newtheorem{exampleenv}{Example}
\newtheorem{app-lemmaenv}[section]{Lemma}
\newenvironment{definition}{\begin{definitionenv}\rm}{\end{definitionenv}}
\newenvironment{lemma}{\begin{lemmaenv}\rm}{\end{lemmaenv}}
\newenvironment{theorem}{\begin{theoremenv}\rm}{\end{theoremenv}}
\newenvironment{corollary}{\begin{corollaryenv}\rm}{\end{corollaryenv}}
\newenvironment{example}{\begin{exampleenv}\rm}{\end{exampleenv}}
\newenvironment{proposition}{\begin{propositionenv}\rm}{\end{propositionenv}}
\newenvironment{conjecture}{\begin{conjectureenv}\rm}{\end{conjectureenv}}
\newenvironment{app-lemma}{\begin{app-lemmaenv}\rm}{\end{app-lemmaenv}}
\newcommand{\bd}{\begin{definition}}
	\newcommand{\ed}{\end{definition}}
\newcommand{\bl}{\begin{lemma}}
	\newcommand{\el}{\end{lemma}}
\newcommand{\elp}{\hspace*{\fill} $\Box$
\end{lemma}}
\newcommand{\bt}{\begin{theorem}}
\newcommand{\et}{\end{theorem}}
\newcommand{\etp}{\hspace*{\fill} $\Box$
\end{theorem}}
\newcommand{\bc}{\begin{corollary}}
\newcommand{\ec}{\end{corollary}}
\newcommand{\ecp}{\hspace*{\fill} $\Box$
\end{corollary}}
\newcommand{\bcj}{\begin{conjecture}}
\newcommand{\ecj}{\end{conjecture}}
\newcommand{\be}{\begin{example}}
\newcommand{\ee}{\end{example}}
\newcommand{\eep}{\hspace*{\fill} $\Box$
\end{example}}
\newcommand{\bp}{\begin{proposition}}
\newcommand{\ep}{\end{proposition}}
\newcommand{\epp}{
\end{proposition}}
\newcommand{\cC}{{\cal C}}
\newcommand{\cG}{{\cal G}}
\newcommand{\cS}{{\cal S}}
\newcommand{\mbp}{{\mathbf{p}}}
\newcommand{\mbq}{{\mathbf{q}}}
\newcommand{\mbg}{{\mathbf{g}}}
\newcommand{\mbh}{{\mathbf{h}}}
\newcommand{\mbe}{{\mathbf{e}}}
\newcommand{\mbs}{{\mathbf{s}}}
\newcommand{\mbA}{{\mathbf{A}}}
\newcommand{\mbB}{{\mathbf{B}}}
\newcommand{\mbC}{{\mathbf{C}}}
\newcommand{\mbI}{{\mathbf{I}}}
\newcommand{\mbH}{{\mathbf{H}}}
\newcommand{\mbL}{{\mathbf{L}}}
\newcommand{\Tau}{{\mathrm{T}}}
\newcommand{\wt}[1]{\mathrm{wt}\left(#1\right)}
\newcommandx{\yellownote}[2][1=]{\todo[inline,linecolor=yellow,backgroundcolor=yellow!25,bordercolor=yellow,#1]{#2}}
\begin{document}

\title{Efficient Approximate Degenerate Ordered Statistics Decoding for Quantum Codes via Reliable Subset Reduction}
\author{Ching-Feng Kung,\, Kao-Yueh~Kuo, \,and\, Ching-Yi~Lai
\thanks{\footnotesize
This article was presented in part at the 2023 International Symposium on Topics in Coding (ISTC)~\cite{KKL23}.

	CYL   was supported by the National Science and Technology Council  in Taiwan, under Grant Nos. 113-2221-E-A49-114-MY3 and 113-2119-M-A49-008-.
 
CFK and CYL are with the Institute of Communications Engineering, National Yang Ming Chiao Tung University, Hsinchu 300093, Taiwan. (email: cylai@nycu.edu.tw)

KYK is with the School of Mathematical and Physical Sciences, University of Sheffield, UK.
}
}
\maketitle

\begin{abstract}

	   Efficient and scalable decoding of quantum codes is essential for high-performance quantum error correction. 
	  	{ In this work, we introduce Reliable Subset Reduction (RSR), a reliability-driven preprocessing framework that leverages belief propagation (BP) statistics to identify and remove highly reliable qubits, substantially reducing the effective problem size. }
	   Additionally, we identify a degeneracy condition that allows high-order OSD to be simplified to order-0 OSD. By integrating these techniques, we present an ADOSD algorithm that significantly improves OSD efficiency.
 	   {Our BP+RSR+ADOSD framework extends naturally to circuit-level noise and can handle large-scale codes with more than $10^4$ error variables. }
 	   {
Through extensive simulations, we demonstrate improved performance over MWPM and Localized Statistics Decoding for a variety of CSS and non-CSS codes under the code-capacity noise model, and for rotated surface codes under realistic circuit-level noise.
	     At low physical error rates, RSR reduces the effective problem size to as little as 1\% (e.g., for $\epsilon=0.001$ in surface-code DEM), enabling higher-order OSD with  drastically reduced computational complexity. These results highlight the practical efficiency and broad applicability of the BP+ADOSD framework for both theoretical and realistic quantum error correction scenarios.
	}

\end{abstract}

\section{Introduction}
\label{sec:intro}

Reliable quantum communication is a critical area of research, essential for scaling quantum systems, enabling the exchange of quantum information across distances, and facilitating multiparty protocols. Quantum states are inherently fragile, requiring the implementation of quantum error correction to mitigate the effects of noise and decoherence~\cite{Got14,Ter15,LK24}.

Quantum stabilizer codes, analogous to classical linear block codes, allow efficient encoding and binary syndrome decoding~\cite{GotPhD,CRSS98,NC00}. A notable class of stabilizer codes is quantum low-density parity-check (LDPC) codes, which are preferred for their high code rates and feasible syndrome measurements from low-weight stabilizers. These codes can be decoded using belief propagation (BP), similar to classical LDPC codes~\cite{MMM04,PC08,Wan+12,Bab+15,ROJ19,KL20,KL21a,Gal62,MN95,Mac99,Tan81,Pea88,KFL01}.
A quantum LDPC code of the Calderbank--Shor--Steane (CSS) type \cite{CS96,Ste96} can be decoded by the BP algorithm for binary codes (BP$_2$) 
by treating $X$ and $Z$ errors separately.
  When $X$ and $Z$ errors exhibit correlations, such as in depolarizing channels,  a quaternary BP algorithm (BP$_4$) could be more effective, as it better exploits the error correlations during decoding. An important feature of quantum codes is their binary error syndromes, unlike classical nonbinary codes. 
  Consequently, the complexity of BP$_4$ can be reduced by restricting message passing in the Tanner graph~\cite{KL20} to scalar messages, which encode the likelihood ratios associated with Pauli commutation relations.

The decoding problem for highly degenerate quantum codes, such as topological codes~\cite{Kit03,BM06,BM07,HFDM12,THD12,LAR11,ATBFB21}, remains particularly challenging for belief propagation (BP). A variety of remedies have been proposed to improve BP convergence and accuracy, including random perturbation~\cite{PC08}, enhanced feedback~\cite{Wan+12,Bab+15}, check-matrix augmentation~\cite{ROJ19}, message normalization and offsets~\cite{KL20,KL21a}, and trapping-set analysis~\cite{RV21,CPR+24}. A notable advance is MBP$_4$~\cite{KL22}, which introduces memory effects and enables BP to decode many topological codes.

When BP alone is insufficient, postprocessing decoders such as ordered statistic decoding (OSD) are commonly employed. OSD was originally developed in classical coding theory as a controllable-complexity approximation to maximum-likelihood decoding~\cite{FL95,Wolf78,LV95,Eli57,Eli91,For66,Cha72,SB89,KNIH94,HHC93}, and later adapted to syndrome-based and quantum decoding~\cite{FLS98,PK21}. Its core operation is Gaussian elimination on an $m\times n$ parity-check matrix, followed by flipping unreliable bits, but high-order OSD incurs significant computational cost~\cite{FL96,GS97,AL12,YSLV19,YSVL21,YCC22}.

In the quantum regime, several BP–OSD hybrids and variants have been proposed~\cite{RWBC20,iOM24closed,GCR24}, along with divide-and-conquer approaches such as ambiguity clustering (AC)~\cite{WB24ambiguity} and localized statistics decoding (LSD)~\cite{hillmann2025localized}. These methods attempt to reduce the Gaussian elimination burden by partitioning the problem.

Despite these developments, most existing postprocessing schemes under the code-capacity noise model rely on binary BP applied to quaternary quantum errors, thereby discarding $X/Z$ correlations and leading to intrinsic suboptimality. In contrast, the quaternary AMBP$_4$ decoder~\cite{KL22} directly exploits these correlations and has been shown to outperform all of the above decoders on a wide range of quantum codes, including topological codes~\cite{BM07}, BB codes~\cite{BCG+24}, and generalized hypergraph-product codes~\cite{PK21,RWBC20}.
{However, AMBP$_4$ still fails on certain important code families, such as lift-connected surface (LCS) codes~\cite{ORM24}, motivating the need for a more powerful postprocessing framework.}

   {
   In this paper, we investigate postprocessing techniques for enhancing the performance of BP. When BP fails to converge to a valid global solution, it still provides rich statistical information about the error configuration. In particular, observing the convergence behavior of belief propagation on sparse graphs, most error variables are typically resolved with high confidence, except for a small number of unsettled variables associated with trapping sets~\cite{RV21}. Consequently, a large fraction of variables can be regarded as already determined and effectively part of the solution.
}
   
   {Based on this observation, we propose reliable subset reduction (RSR). In the code-capacity decoding problem of an $[[n,k]]$ quantum code, decoding amounts to solving a linear system with $2n$ binary variables subject to $n-k$ independent constraints. The central idea of RSR is that if a subset of variables can be identified as highly reliable, the linear system can be reduced by fixing those variables, thereby substantially lowering the computational burden of subsequent postprocessing steps, such as Gaussian elimination.
   
   To identify such a reliable subset, we exploit both the hard-decision history and the final soft information produced by BP$_4$~\cite{KKL23}.} Variables whose hard decisions remain nearly constant across BP iterations and whose final beliefs have high confidence are classified as reliable. At low physical error rates, a large fraction of variables satisfy these criteria, so eliminating them dramatically reduces the effective problem size and accelerates downstream postprocessing decoders.

   {We apply OSD after RSR, using the reliabilities produced by RSR to order the remaining unreliable variables.} The resulting OSD variant, which is driven by BP$_4$ statistics, is denoted as OSD$_4$.

We next consider approximate degenerate OSD for quantum codes. In typical OSD, a list of error candidates consistent with a given syndrome is generated and the minimum-weight candidate is selected. For quantum codes, however, the optimal criterion is to choose the most probable error coset, which is generally intractable. To address this, we adopt an approximate degenerate decoding strategy in which only low-weight representatives within each coset are considered.
{When RSR is applied, the resulting reduced problem makes such approximate degenerate decoding feasible, in a manner analogous to the ambiguity clustering (AC) approach~\cite{WB24ambiguity}.} For computational efficiency, we nevertheless use minimum-weight selection among the retained candidates in our simulations.

Additionally, we introduce a degeneracy-aware pruning condition for stabilizer codes. We show that certain OSD bit flips correspond to multiplying stabilizers and therefore generate degenerate error candidates: although their weights may differ, they belong to the same error coset. Such flips can be ignored in approximate degenerate OSD.
{ A key observation is that when all candidate flips correspond to the trivial logical coset (i.e., zero logical syndrome), high-order OSD is unnecessary.}

By combining these ideas, we obtain an approximate degenerate OSD algorithm, which we denote as ADOSD$_4$.

	Finally, there has been substantial recent progress in circuit-level decoding~\cite{Pry20,du2022stabilizer,du2024check,BCG+24,SMR+24}. A variety of decoders have been developed for circuit-level noise models, including BP-OSD-CS, GDG, AC~\cite{WB24ambiguity}, LSD~\cite{hillmann2025localized}, and ordered Tanner forest~\cite{iOMR+24}.
	Our BP–OSD framework can also be extended to handle both data and syndrome errors under the phenomenological noise model~\cite{QVRCT21,HB22}, using techniques proposed in~\cite{ALB20,KCL21,KL25}, as demonstrated in our recent work~\cite{KL24b}.

{
	In this paper, we further show that our approach can be directly applied to the full circuit-level noise model through the detector error model (DEM) generated by STIM~\cite{Gid21stim}. The resulting binary DEM decoding problem can be handled using the proposed BP–RSR–ADOSD procedure.
}

We conduct simulations of  MBP$_4$ and our OSD schemes OSD$_4$ and ADOSD$_4$ on several quantum codes under depolarizing errors in the code capacity noise model, including   BB codes~\cite{BCG+24}, rotated toric codes and rotated surface codes~\cite{BM07}, (6.6.6) and (4.8.8) color codes~\cite{LAR11}, twisted XZZX codes on a torus~\cite{KDP11},   GHP codes~\cite{PK21,RWBC20},
 and  LCS codes~\cite{ORM24}.
The results demonstrate that our proposed schemes outperform previous approaches in the literature, both in terms of error thresholds for topological codes and low error rate performance (logical error rate around $10^{-6}$).

{
	In DEM circuit-level setting for rotated surface codes, we further show that MBP$_4$+ADOSD$_4$ consistently outperforms LSD. When both $X$- and $Z$-type detector variables are used, MBP$_4$+ADOSD$_4$ also surpasses the minimum-weight perfect-matching (MWPM) decoder~\cite{Edm65,WFSH10} and achieves an improved threshold of approximate $0.76\%$.
}

{
 Beyond improved decoding performance, a central outcome of our simulations is the strong reduction effect produced by reliable subset reduction (RSR): at low physical error rates, RSR typically reduces the effective decoding problem size by one to two orders of magnitude (e.g., below $4\%$ of the original size for $\epsilon\le 0.005$, and below $1\%$ for $\epsilon=0.001$ in surface-code DEM problems).
 This dramatic dimensionality reduction enables higher-order OSD to be applied efficiently even for problems with more than $10^4$ error variables, a regime that is far beyond the reach of conventional OSD-based postprocessing.
 As a result, our MBP–RSR–ADOSD approach achieves strong threshold performance for topological codes and substantially improved logical error rates in the low-error regime.
}
 {
 Crucially, these results are achieved under a strict budget of at most 10 BP iterations. In practice, when MBP$_4$ converges, it typically does so within 3 to 6 iterations.
}

This paper is organized as follows. We introduce stabilizer codes in Section~\ref{sec:stb}.
The reliable subset reduction method and the associated reliability metrics are presented in Section~\ref{sec:RSR}.
We then introduce the associated OSD algorithm in Section~\ref{sec:BPOSD}.
Approximate degenerate decoding is covered in Section~\ref{sec:ADOSD}.
The MBP-ADOSD procedure is extended to circuit-level decoding via DEM in Section~\ref{sec:DEM}, and simulation results are provided in Section~\ref{sec:sim}.
Related work is discussed in Section~\ref{sec:related}.
Finally, we conclude in Section~\ref{sec:conclusion}.

 \section{Quantum stabilizer codes} \label{sec:stb}
In this section, we review the basic of stabilizer codes and define the relevant notation \cite{GotPhD,CRSS98,NC00}.  

Let $I=\begin{bmatrix}
	1&0\\0&1
\end{bmatrix},$ $ X=\begin{bmatrix}
	0&1\\1&0
\end{bmatrix},$ $ Z=\begin{bmatrix}
	1&0\\0&-1
\end{bmatrix},$ and $ Y=iXZ$ denote the Pauli matrices. 
 The $n$-fold Pauli group, $\mathcal{G}_n$,  is the group of  $n$-qubit Pauli operators, defined as
$$\mathcal{G}_n \triangleq \left\{ \omega M_1\otimes\dots\otimes M_{n}: \omega\in \{\pm 1, \pm i\}, M_j\in \{I, X, Y, Z \} \right\}.$$ 
Every nonidentity Pauli operator in $\mathcal{G}_n$ has eigenvalues $\pm 1$. Any two Pauli operators  either commute or anticommute with each other. 
The weight of a Pauli operator $g\in\cG_n$, denoted $\wt{g}$, refers to the number of its nonidentity components.

We consider Pauli errors, assuming independent depolarizing errors with rate $\epsilon$. Each qubit independently experiences an $X$, $Y$, or $Z$ error with probability $\epsilon/3$ and no error with probability $1-\epsilon$.
Therefore, at a small error rate $\epsilon$, low-weight Pauli errors are more likely to occur across $n$ qubits. 

A \emph{stabilizer group} $\mathcal{S}$ is an Abelian subgroup in $\mathcal{G}_n$ such that $-I^{\otimes n}\not\in \mathcal{S}$ \cite{NC00}. 
Suppose that 
$\mathcal{S}$ is generated by $n-k$ independent generators.
Then $\mathcal{S}$ defines an  $[[n,k,d]]$ stabilizer code $\mathcal{C}(\mathcal{S})$ that encodes $k$ logical qubits into $n$ physical qubits:	
$$\mathcal{C}(\mathcal{S}) = \left\{ \ket{\psi}\in  \mathbb{C}^{2^n} : g\ket{\psi} = \ket{\psi}, ~\forall\, g \in \mathcal{S} \right\}.$$
The elements in $\mathcal{S}$ are called \emph{stabilizers}.
 The  parameter $d$ represents the minimum distance of the code such that any Pauli error of weight less than $d$ is \textit{detectable}.

An error $e\in\cG_n$ can be detected by $\cC(\cS)$ through stabilizer measurements  if $e$ anticommutes with some of the stabilizers. 
Suppose that $\{g_i\}_{i=1}^m$, where $m \ge n-k$, is  a set of stabilizers that generates $\cS$. Their binary measurement outcomes  are referred to as the \emph{error syndrome} of $e$.
Let $N(\cS) \subset \cG_n$ denote the normalizer group of $\cS$, which consists of the Pauli operators that commute with all stabilizers. Consequently, if an error is in $N(\cS)$, it will have zero syndrome and cannot be detected. Note that if an error is a stabilizer, it has no effect on the code space.
An element in   the normalizer group $N(\cS)$ that is not a stabilizer, up to a phase, is called a  \textit{nontrivial logical operator as it changes the logical state of a code} .  Therefore, the minimum distance of $\cC(\cS)$ is defined as the minimum weight of a nontrivial logical operator.
It can be observed that  $e\in\cG_n$ and $eg$ for $g\in\cS$ have the same effects on the codespace. They are  called \textit{degenerate} to each other.
We say that a quantum code is \textit{highly degenerate} if it has many stabilizers with low weight relative to its minimum distance.

{
A syndrome decoding problem is defined as follows: given a measured syndrome corresponding to an unknown Pauli error $e$, a \emph{decoder} outputs an estimate $\hat e \in \mathcal G_n$.
An error estimate $\hat{e}$ is considered \emph{valid} if it reproduces the measured syndrome.
	The decoder is said to \emph{succeed} if the estimate is degenerate with the true error, i.e.,
\[
\hat{e} e \in \cS,
\]
or equivalently, $\hat{e} e$ commutes with all elements of the normalizer group $N(\cS)$.
Otherwise, the decoder produces a logical error.}

Two decoding criteria are typically considered: minimum weight decoding and degenerate maximum likelihood decoding \cite{HL11,KL13_20,IP15}. In minimum weight decoding, the error $\hat{e}$ with the smallest weight is chosen, while in degenerate maximum likelihood decoding, the error $\hat{e}$ whose coset $\hat{e} \mathcal{S}$ has the minimum coset probability is selected. However, degenerate maximum likelihood decoding has exponential complexity and is impractical due to the need for coset enumeration. We propose the following criterion.

\begin{definition} (Approximate Degenerate Decoding)\\
 A $\delta$-approximate degenerate decoding ($\delta$-ADD) criterion  aims to find an error estimate  $\hat{e}$ that matches the given syndrome, maximizing the dominant terms of its coset probability 
 \begin{align}
     \sum_{g\in\cS, \wt{g}\leq \delta }\Pr \big\{ \hat{e} g\big\}.
 \end{align}
\end{definition}
Note that $0$-ADD reduces to minimum weight decoding, while $n$-ADD corresponds precisely to degenerate maximum likelihood decoding.

\subsection{Binary representations}

We can study the decoding problem in the binary vector space~\cite{GotPhD,NC00}, using a mapping $\varphi: \cG_1\rightarrow \{0,1\}^2$ 
    $$I \mapsto  \left[\begin{array}{c|c}
        0 &  0
        \end{array}\right],X\mapsto \left[\begin{array}{c|c}
        1 &  0
        \end{array}\right], Z\mapsto \left[\begin{array}{c|c}
        0 &  1
        \end{array}\right], Y\mapsto \left[\begin{array}{c|c}
        1 &  1
        \end{array}\right].$$
We extend the $\varphi$ to $\cG_n$ as follows:  For $g=M_{1} \otimes M_{2} \otimes \dots\otimes M_{n}\in\cG_n$, 
\[
\varphi(g)\triangleq \mbg	= \left[\begin{array}{c|c}
	\mbg^X &  \mbg^Z
\end{array}\right] = \left[\begin{array}{c|c}
a_1 \cdots  a_n &  b_1 \cdots b_n 
\end{array}\right] \in\{0,1\}^{2n},
\]
where $\mbg^X=(a_1, \dots, a_n)\in\{0,1\}^n$ and $\mbg^Z=(b_1,\dots,b_n)\in \{0,1\}^n$ are the indicator vectors of $X$ and $Z$ components of $g$, respectively,
such that  $\varphi(M_{j})=\left[
\begin{array}{c|c}
	a_j &  b_j
\end{array}\right].$

A \emph{check matrix} of  $\cC(\cS)$ corresponding to a set of stabilizer generators $\{g_i\}_{i=1}^m$
is an $m\times 2n$ binary matrix $\mbH$  with rows $\varphi(g_1),\dots, \varphi(g_m)$.

 An error $e\in\cG_n$ can be represented by $\mbe=\varphi(e)  \in\{0,1\}^{2n}$ and
 its error syndrome is 
 \begin{align}
 \mbs = \mbH \Lambda \mbe^\top 
   \in\{0,1\}^{m\times 1}, \label{eq:syndrome}
 \end{align}
 where  $\mbe^\top$ denotes the transpose of $\mbe$,
 $\Lambda = \left[ \begin{array}{c|c}\mathbf{0} & \mbI_n\\ \mbI_n & \mathbf{0} \end{array}\right],$ 
 $\mbI_n$ is the $n\times n$ identity matrix, 
 and  $\mathbf{0}$ is the zero matrix of appropriate dimensions.

Suppose this code has $2k$ independent logical operators, $\bar{X}_i$ and $\bar{Z}_j$ for $i,j=1,\dots,k$,  which, together with the stabilizers $\{g_i\}$, generate the normalizer group $N(\cS)$. The $2k\times 2n$ binary logical matrix $\mbL$ is defined with rows $\varphi(\bar{X}_1),\dots, \varphi(\bar{X}_k), \varphi(\bar{Z}_1),\dots, \varphi(\bar{Z}_k)$. 
Since stabilizers commute with logical operators, we have $\mbH\Lambda \mbL^\top=\mathbf{0}$.

The decoding problem is to solve the   system of linear equations (\ref{eq:syndrome}) with  $2n$ binary variables in $\mbe\in\{0,1\}^{2n}$ that are most probable.
Note that $ \mbH$ is of rank $n-k$ so
the degree of freedom of this binary system is $n+k$.

{
An error $e \in \cG_n$ is classified as a \emph{logical error} for a decoder if the decoder fails to produce a valid estimate, or outputs an estimate $\hat{e} \in \cG_n$ such that
\begin{align}
	\mbH \Lambda \mbe^\top = \mbH \Lambda \hat{\mbe}^\top
\end{align}
but
\begin{align}
	\mbL \Lambda \mbe^\top \neq \mbL \Lambda \hat{\mbe}^\top .
\end{align}
}
{
	We define the \emph{logical syndrome} of an error $\mbe$ as
	\begin{align}
		\mbL \Lambda \mbe^\top \in \{0,1\}^{2k \times 1}. \label{eq:logical_syndrome}
	\end{align}
}

\subsection{Belief Propagation (BP) Decoding} \label{sec:BPs}
We  explain how a quaternary BP algorithm works in the following~\cite{KL20}, which is computed in linear domain. 
A log-likelihood version of BP can be found in~\cite{KL21a}.

 {Given a check matrix $\mbH$, a syndrome $\mbs$, and the prior error distributions 
 	$\mbp_{i}\in\mathbb{R}^4$ for qubit $i\in\{1,2,\dots,n\}$,}
 BP performs the following steps:
 \begin{enumerate}
 	\item The belief distribution vector $\mbq_{i} = (q_i^I, q_i^X, q_i^Y, q_i^Z)$ is updated by BP, using the parity-check messages and the prior $\mbp_{i}$, for $i=1,2,\ldots,n$.
 	
 	\item At each iteration, an error estimate $\hat{\mbe}\in\{0,1\}^{2n}$ is formed by hard decisions on $\mbq_{i}$. 
 	If $\hat{\mbe}$ is \emph{valid}, i.e., it matches the measured syndrome, it is output as the BP estimate; otherwise, proceed to the next iteration starting from step~1).
 \end{enumerate}
 These two steps are iterated for at most $\Tau$ iterations, where $\Tau$ is chosen in advance. 
 If no valid estimate is obtained after $\Tau$ iterations, BP declares a decoding failure.

{
	\section{The Reliable Subset Reduction Method} \label{sec:RSR}
	
	In practice, a BP decoder typically either converges within a small number of iterations, on the order of $O(\log\log n)$ on average, or remains unsettled until the maximum number of iterations is reached.  
	In the latter case, although BP fails to converge globally, most error variables have already been decided with high confidence.  
	From the BP output, a large fraction of variables are therefore very reliable and can be regarded as part of the solution.  
	The remaining unsettled variables are usually associated with trapping sets~\cite{RV21}, and additional post-processing techniques are required to resolve them.

}

 \subsection{Reliable subset reduction for the syndrome decoding problem}\label{sec:rsr_decoding}
 
 In the syndrome decoding problem, we seek a binary vector $\mbe\in\{0,1\}^{2n}$ satisfying
 \[
 \mbH\Lambda \mbe^\top = \mbs,
 \]
 where $\mbH$ has rank $n-k$ and hence the solution space has dimension $n+k$. 
 When BP fails to converge to a valid solution within $\Tau$ iterations, a postprocessing algorithm such as OSD is typically required.
 
  {
We expect that after a few BP iterations, most variables have already converged to highly reliable values, while only a small subset remains ambiguous.
 We denote by $R\subset\{e_1,\dots,e_{2n}\}$ a subset of \emph{reliable} variables, whose values are assumed to be correct and fixed.
 The remaining variables are treated as unknowns.
}

 By eliminating the reliable variables from the parity-check equations, the original syndrome decoding problem can be reduced to a much smaller linear system.
 We refer to this procedure as \emph{reliable subset reduction (RSR)}.
 The number of remaining variables after reduction is called the \emph{effective length}.

 \begin{algorithm}[ht] \caption{RSR} \label{alg:RSR}
 	
 	\textbf{Input}: parity-check matrix $\mbH\in\{0,1\}^{m\times 2n}$, syndrome $\mbs$, reliable subset $R$.
 	
 	\textbf{Output}: reduced matrix $\tilde{\mbH}$, reduced syndrome $\tilde{\mbs}$, and permutation $\sigma$.
 	
 	{\bf Steps}: 
 	\begin{algorithmic}[1] \itemsep=6pt

 		\State Let $\sigma$ be a column permutation such that
 		$\sigma(\mbe) =  \begin{bmatrix}
 			\mbe'&\mbe^{R}
 		\end{bmatrix}$,
 		where   $\mbe^{R}$ is the reliable subset of $v$ bits,
 		and    $\mbe'$ is the remaining $2n-v$ bits. 
 		Let $\sigma (\hat{\mbe})= 
 		\begin{bmatrix}
 			\hat{\mbe}'&\mbe^{R}
 		\end{bmatrix}$ be a permuted error estimate,
 		which  contains
 		the   reliable bits in  $\mbe^{R}$. 
 		
 		\State Let $\tau$ be a row permutation such that $
 		\tau(\sigma(\mbH\Lambda))= \begin{bmatrix}
 			\Tilde{\mbH} &  \mbB\\
 			\mathbf{0} &\mbC\\
 		\end{bmatrix}$ for $\tilde{\mbH}\in\{0,1\}^{m'\times (2n-v)}$ with $m'\leq m$. The $m-m'$ rows  $\begin{bmatrix}
 			\mathbf{0} &\mbC
 		\end{bmatrix}$  correspond to  parity checks related only to the  reliable bits.  
 		Assume that $\tau (\mbs)= \begin{bmatrix}
 			\mbs'' \\ \mbs^{R}
 		\end{bmatrix}$, where $\mbs''\in\{0,1\}^{m'\times 1}$,
 		and $\mbs^{R}\in\{0,1\}^{(m-m')\times 1}$ are the syndrome bits  corresponding to 
 		$\begin{bmatrix}
 			\mathbf{0} &\mbC
 		\end{bmatrix}$.
 		
 		\State  Verify the parity check condition related to the highly reliable bits.
 		\begin{algorithmic}
 			\If{$\mbs^{R}\neq  \mbC \big(\mbe^{R}\big)^\top$}
 			{} \Return ``Failure";
 			\Else
 			{}  we have a new parity-check condition
 			\begin{align}
 				\tilde{\mbs}=&  \Tilde{\mbH} \big(\hat{\mbe}'\big)^\top,  \label{eq:reduced_syndrome}
 			\end{align}
 			\State where      $\tilde{\mbs}=\mbs''-  \mbB\big(\mbe^{R}\big)^\top$. 
 			\If{(\ref{eq:reduced_syndrome}) is solvable}
 			{} \Return  $\tilde{\mbs}$,  $\Tilde{\mbH}$,  and $\sigma$;
 			\Else  
 			{} \Return ``Failure".
 			\EndIf
 			\EndIf

 		\end{algorithmic}
 		
 	\end{algorithmic}
 	
 \end{algorithm}

Equation~(\ref{eq:reduced_syndrome}) follows directly from the block decomposition
\[
\tau(\sigma(\mbH))=
\begin{bmatrix}
	\tilde{\mbH} & \mbB\\
	\mathbf 0 & \mbC
\end{bmatrix},
\]
which separates the contributions of the unknown variables $\mbe'$ and the fixed reliable variables $\mbe^R$.

{
	There are two possible failure modes in RSR:
	\begin{enumerate}
		\item[(i)] {Stage-1 failure:} the reliable bits are inconsistent with the syndrome, i.e.,
		$\mbs^{R}\neq \mbC (\mbe^{R})^\top$.
		
		\item[(ii)] {Stage-2 failure:} the reduced linear system
		$\tilde{\mbH}(\hat{\mbe}')^\top=\tilde{\mbs}$
		is rank-deficient or otherwise unsolvable.
	\end{enumerate}
Both conditions are explicitly verified in Step~3 of Algorithm~\ref{alg:RSR}.
The stage-1 failure corresponds to checking the sparse consistency condition
\[
\mbs^{R}= \mbC (\mbe^{R})^\top,
\]
which only requires a sparse matrix--vector multiplication and can be performed efficiently.
The stage-2 failure concerns the rank and pivot structure of $\tilde{\mbH}$.
In many applications, this information must be extracted in order to solve the reduced linear system, for example through Gaussian elimination or related procedures.
Therefore, this check can often be performed with little or no additional overhead when RSR is used together with a downstream decoder that requires such a step (e.g., OSD~\cite{PK21}, LSD~\cite{hillmann2025localized}).
}

 {
If RSR succeeds, the reduced system
\[
\tilde{\mbH} (\hat{\mbe}')^\top=\tilde{\mbs}
\]
is passed to a downstream decoder to obtain an estimate $\hat{\mbe}'$.
}
The full error estimate is recovered as
\[
\hat{\mbe}=\sigma^{-1}\big(\hat{\mbe}',\mbe^R\big).
\]

  \subsection{Reliability order based on BP$_4$} \label{sec:reliab}
  {To quantify the reliability of each error-variable estimate, we combine two sources of information:
  (i) the hard-decision trajectory of BP across all iterations, and
  (ii) the belief distribution produced at the final BP iteration.}

 Suppose that BP terminates without finding a valid solution after a maximum of $\Tau$ iterations.
 At this point, BP provides a belief distribution
 $\mbq_i=(q_i^I,q_i^X,q_i^Y,q_i^Z)$ for each qubit $i=1,\dots,n$,
 as well as a sequence of hard decisions
 $\mbh_i^{(j)}\in\{I,X,Y,Z\}$ for $j=1,\dots,\Tau$.
 Here $\mbh_i^{(j)}$ denotes the Pauli error assigned to qubit $i$ at iteration $j$.
 
 Our goal is to assign a reliability score separately for the $X$ and $Z$ components of each qubit.
 Following \cite{KKL23,PK21}, we introduce two complementary measures.

 \begin{definition} \label{def:LastFor} 
 	Define a   \emph{hard-decision  reliability vector} $\bm{\eta}=(\bm{\eta}_1,\dots,\bm{\eta}_n)\in\mathbb{Z}^n$ such that $\bm{\eta}_i$ 
 	represents the number of consecutive iterations during which the error at qubit $i$ remains unchanged until the final hard decision is made.
 \end{definition}

 \noindent 
 In other words, $\bm{\eta}_i$ is the length of the last run in the string $\mbh_{i}^{(1)},\mbh_{i}^{(2)},\dots, \mbh_{i}^{(\Tau)}$. 
 For example,  if the hard-decision outputs at the first qubit over the last five iterations are $X, Y, Z, Z, Z$, then the final hard decision output is $Z$ and  $\bm{\eta}_1=3$. This means that the error at qubit 1 persisted unchanged as $Z$ for three iterations when BP stops.
{A larger $\eta_i$ indicates that the estimate at qubit $i$
 has stabilized over a longer period and is therefore more reliable.}

 When BP fails, the belief distributions $\mbq_{i} = (q_i^I, q_i^X, q_i^Y, q_i^Z)$  will be utilized in the reliability measure. 
 \begin{definition}  
 	Define two   \textit{soft  reliability functions}
 	\begin{align*}
 		\phi^X(i) &= \max\{ q_{i}^X+q_{i}^Y, q_{i}^I+q_{i}^Z\},\\
 		\phi^Z(i) &= \max\{q_{i}^Z+q_{i}^Y, q_{i}^I+q_{i}^X\},
 	\end{align*}
 	for  $i\in\{1,2,\dots,n\}$.
 	
 \end{definition}
 \noindent  Here $\{q_i^X+q_i^Y,\;q_i^I+q_i^Z\}$ forms a binary distribution
 corresponding to whether an $X$ error is present on qubit $i$,
 and similarly $\{q_i^Z+q_i^Y,\;q_i^I+q_i^X\}$ corresponds to a $Z$ error.

 \begin{definition}
 	\label{def:highly_reliable}
 	An error bit $\mbe_i^a$, where $i\in\{1,\dots,n\}$ and $a\in\{X,Z\}$, is said to be \emph{highly reliable} if
 	its hard-decision reliability is either $\Tau$ or $\Tau+1$,
 	and its soft reliability $\phi^a(i)$ satisfies
 	\[
 	\phi^a(i)\ge \theta,
 	\]
 	where $\theta\in(0,1)$ is a predefined reliability threshold.
 \end{definition}

 An error variable has hard reliability $\Tau$ if its hard decision remains constant throughout all BP iterations.
 A value of $\Tau+1$ occurs when the variable stays equal to the identity operator $I$ from initialization through all iterations.
 The threshold $\theta$ is chosen close to one to ensure that a highly reliable coordinate is not only stable in its hard-decision history but also strongly supported by the final soft information.

 Now we define a reliability order for the bits in $\mbe	= \left[\begin{array}{c|c}
 	\mbe^X & \mbe^Z
 \end{array}\right] \in \{0, 1\}^{2n}$ based on the outputs of  BP$_4$, using the aforementioned hard and soft reliability functions.
   \begin{definition}
   	\label{def:reliability order}
   	For $1\le i,j\le n$ and $a,b\in\{X,Z\}$, the error bit $\mbe_i^a$ is said to be \emph{more reliable} than $\mbe_j^b$ if either
   	\[
   	\eta_i > \eta_j,
   	\]
   	or
   	\[
   	\eta_i=\eta_j \quad\text{and}\quad \phi^a(i)\ge \phi^b(j).
   	\]
   \end{definition}
   
    \noindent   That is, the stability of the hard-decision trajectory at the qubit level provides the primary ranking, while the final soft reliability acts as a tie-breaker between individual $X$ and $Z$ components.

 Our modified BP$_4$ algorithm 
is presented in Algorithm~\ref{alg:BP4}.

 { 
 	There exist various reliability metrics based on the (weighted) accumulation of log-likelihood ratio (LLR) signs across BP iterations \cite{GO05,GOK06,JZXZ07,rosseel2023sets}. Our hard-decision reliability can be viewed as a natural extension of this idea, focusing on the length of the most recent run in the hard-decision sequence.
 	The reliability order in Definition~\ref{def:reliability order} can also be adapted to incorporate alternative metrics, such as Entropy or Max \cite{PK21}. In these variants, the reliability of qubit $i$ is quantified either by the quaternary entropy of its belief distribution $\mathbf{q}_i$ (Entropy) or by the maximum probability within the distribution (Max).
 }

 \begin{algorithm}[ht] \caption{BP$_4$} 
 	\label{alg:BP4}
 	\textbf{Input}:  
 	check matrix $\mbH\in\{0,1\}^{m\times 2n}$,   syndrome $\mbs\in\{0,1\}^{m\times 1}$,       maximum number of iterations $\Tau$, {and initial error distributions $\mbp_{i}\in\mathbb{R}^4$ for $i=1,\dots,n$.}
 	
 	
 	\textbf{Output}: A valid error estimate $\hat{\mbe}\in\{0,1\}^{2n}$,
 	belief distributions  $\{\mbq_i\}_{i=1}^n$ and  hard-decision reliability vector $\bm{\eta}\in\mathbb{Z}^n$.
 	
 	\textbf{Initialization}:
 	\begin{algorithmic}
 		\State  
 		Let $\bm{\eta}_i=1$ for $i=1,\dots,n$.
 		Let $\mbh_i=I$ for $i=1,\dots,n$ be the initial hard decisions.
 		Let $\mbq_i=\mathbf{0}$ for $i=1,\dots, n$.
 	\end{algorithmic}
 	{\bf Steps}: 
 	\begin{algorithmic}
 		\For {$j = 1$ to $\Tau$}
 		\State  Update $\{\mbq_i\}_{i=1}^n$, using $\mbH, \mbs,$ and $ \{\mbp_i\}_{i=1}^n$.
 		\For {$i = 1$ to $n$}
 		\If{ $\mbh_{i}=\operatorname{HardDecision}(\mbq_i)$}
 		{} $\bm{\eta}_i\gets \bm{\eta}_i+1$;
 		\Else 
 		{}   $\mbh_i \gets  \operatorname{HardDecision}(\mbq_i)$. $\bm{\eta}_i=1$.  
 		\EndIf
 		\EndFor

 		\If{ the hard-decision ${\mbh}$ matches the syndrome $\mbs$}  \State \Return ``Success" and $\varphi(\mbh)$.
 		\EndIf
 		\EndFor
 		\State \Return  ``Failure", $\{\mbq_i\}_{i=1}^n$, and $\bm{\eta}$. \Comment{BP fails.}
 	\end{algorithmic}
 \end{algorithm}

\section{Order-statistic decoding   based on quaternary belief propagation } \label{sec:BPOSD}

{
	Once BP fails to produce a valid error estimate, combinatorial post-processing techniques are required. 
	These techniques operate on the residual errors left by BP and can include methods such as OSD, LSD, or other syndrome-based decoders. 
	In this section, we first review the basic OSD procedure, which relies on the reliability order defined in Section~\ref{sec:reliab}. 
	When combined with the RSR method introduced in Section~\ref{sec:rsr_decoding}, OSD can operate on a reduced system of variables, thereby lowering computational complexity.
}

In the context of OSD, a critical step involves identifying and sorting the accurate and reliable coordinates to establish $n+k$ linearly independent bits. Then, we can use these coordinates to generate a valid error that matches the syndrome. To accomplish this, we use the reliability orders defined in Section~\ref{sec:reliab}.

Let $\tilde{\mbe}$ be the hard-decision output from BP$_4$ after $\Tau$ iterations.
If $\tilde{\mbe}$  does not match the syndrome, OSD will be applied.
We now present our OSD algorithm based on the reliability orders
as in Algorithm~\ref{alg:OSD0},
which is   referred to as OSD$_4$-$0$.

\begin{algorithm}[ht] \caption{OSD$_4$-0} 
\label{alg:OSD0}

\textbf{Input}: 
 check matrix $\mbH\in\{0,1\}^{m\times 2n}$ of rank $n-k$,  syndrome $\mbs\in\{0,1\}^{m\times 1}$,  hard-decision reliability vector $\bm{\eta}\in\mathbb{Z}^n$,
       hard-decision vector $\tilde{\mbe}\in\{0,1\}^{2n}$, 
       and  belief distributions  $\{\mbq_i\}_{i=1}^n$.

    \textbf{Output}: A valid error estimate $\hat{\mbe}\in\{0,1\}^{2n}$.

    {\bf Steps}: 
    \begin{algorithmic}[1]\itemsep=6pt
    \State   Sort the error bits  in ascending order of reliability. 
 	Construct a  column permutation function $\pi$ corresponding to the sorting order. 
     Calculate $\pi(\tilde{\mbe} )$ and $\pi(  \mbH \Lambda )$.

    \State Perform Gaussian elimination on $\pi( \mbH \Lambda)$ and let the output be denoted by  $\rho(\pi( \mbH \Lambda))$, where $\rho$ represents the corresponding row operations in the Gaussian elimination. 
  If necessary, apply a column permutation function $\mu$ to ensure that the first $n-k$ columns of the resulting matrix  are linearly independent.  Thus we have 
	    \begin{align}
	        \mu (\rho(\pi (\mbH\Lambda )) ) = \left[
	    \begin{array}{cc}
	    \mbI_{n-k} & \mbA \\
	    \mathbf{0}       &  \mathbf{0}
	    \end{array}\right],  \label{eq:s'}
	    \end{align}
	for some $\mbA\in\{0,1\}^{(n-k)\times (n+k)}$, where the last $m-(n-k)$ rows   are all zeros.

 \State  Let $\mbs' =\rho(\mbs)$  so that the parity-check condition (\ref{eq:syndrome}) becomes 
 \begin{align}\mbs'=   \mu (\rho(\pi (\mbH\Lambda )) )    (\mu(\pi (\mbe)))^\top. \label{eq:modified_syndrome}
 \end{align} Remove the last $m-(n-k)$ entries of $\mbs'$. 

 \State  Let $ \mu(\pi(\tilde{\mbe}))= \left[\tilde{\mbe}^{U} ~~ \tilde{\mbe}^{R} \right]$,
   where $\tilde{\mbe}^{U}$~represents the independent and   unreliable subset of  $n-k$ bits and 
   $\tilde{\mbe}^{R}$ is the reliable subset of $n+k$ bits.  
   Use the reliable subset $\tilde{\mbe}^R$  and the modified 
   parity-check condition~\ref{eq:modified_syndrome} to select feasible unreliable bits~
   \begin{align}
   \tilde{\mbe}^U= {\mbs'}^\top \oplus  {\tilde{\mbe}^{R}} \mbA^\top . \label{eq:unreliable_subset}
   \end{align}
   \Return  \begin{align}
    \Hat{\mbe} \triangleq \pi^{-1}\left(\mu^{-1}\left(\left[{\mbs'}^\top \oplus  {\tilde{\mbe}^{R}} \mbA^\top \qquad \tilde{\mbe}^{R} \right] \right)\right). \label{eq:osd0}
\end{align} 
 
    \end{algorithmic}

\end{algorithm}

\subsection{OSD$_4$-$w$}\label{sec:osdw}
If some bits in the reliable subset are incorrect, additional processing is necessary. We may flip up to $w$ bits in the reliable subset and generate the corresponding error estimate. If this error estimate is valid, it is an error candidate. 
This process is repeated for all $\sum_{i=0}^w \binom{n+k}{i}$ possibilities, and the output will be a valid error candidate 
that is optimal with respect to $\delta$-ADD for a certain $\delta$.
This type of OSD algorithm is referred to as order-$w$ OSD$_4$ (or OSD$_4$-$w$ for short).

In order to explore all $\sum_{i=0}^w \binom{n+k}{i}$ possibilities,
we propose to use  the depth-first search (DFS) algorithm.
 	Figure~\ref{fig:dfs} illustrates the idea of  using DFS to traverse  all bit strings of length 4 and with  weight at most 2.
  One can see that a child node is obtained by flipping one zero bit from its parent node. Consequently, one can recursively generate all $\sum_{i=0}^w \binom{n+k}{i}$ possibilities by bit flipping and encoding.

\begin{lemma} \label{lemma: flip a bit}
Given an OSD$_4$-0 output vector, flipping one of its reliable bits
to generate an error candidate takes time $O(n)$.
Similarly, generating an error candidate corresponding to a child node from its parent node also takes  $O(n)$ time.
\end{lemma}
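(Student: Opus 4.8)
The plan is to analyze the cost of the two operations described in Lemma~\ref{lemma: flip a bit} by tracking exactly what changes when a single reliable bit is flipped. Recall from Algorithm~\ref{alg:OSD0} that an OSD$_4$-0 output is built as $\hat{\mbe} = \pi^{-1}(\mu^{-1}([\tilde{\mbe}^U \ \ \tilde{\mbe}^R]))$ with $\tilde{\mbe}^U = {\mbs'}^\top \oplus \tilde{\mbe}^R \mbA^\top$. The key structural observation is that flipping the $\ell$-th reliable bit, i.e.\ replacing $\tilde{\mbe}^R$ by $\tilde{\mbe}^R \oplus \mbf_\ell$ where $\mbf_\ell$ is the $\ell$-th standard basis vector of length $n+k$, changes the unreliable part to $(\tilde{\mbe}^R \oplus \mbf_\ell)\mbA^\top = \tilde{\mbe}^U \oplus (\mbf_\ell \mbA^\top) = \tilde{\mbe}^U \oplus \mbA_\ell$, where $\mbA_\ell$ is the $\ell$-th row-as-seen-from-the-reliable-index, i.e.\ the corresponding column of $\mbA$ viewed as a length-$(n-k)$ vector. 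So the new candidate differs from the old one in the single reliable coordinate $\ell$ together with the at most $n-k$ unreliable coordinates flagged by $\mbA_\ell$.

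From this, the first claim follows: the incremental update is an XOR of the stored vector $[\tilde{\mbe}^U\ \ \tilde{\mbe}^R]$ with a vector supported on at most $1 + (n-k) \le n$ coordinates, which costs $O(n)$; applying the fixed, precomputed permutations $\pi^{-1}$ and $\mu^{-1}$ to a length-$2n$ vector is $O(n)$; and checking validity is immediate since by construction any vector of the form $[(\tilde{\mbe}^R\oplus\mbf_\ell)\mbA^\top \ \ \tilde{\mbe}^R\oplus\mbf_\ell]$ already satisfies the reduced parity-check condition~(\ref{eq:modified_syndrome}), so it maps back to a valid error under $\pi^{-1}\mu^{-1}$ with no extra work. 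Hence generating one error candidate by flipping a single reliable bit of the OSD$_4$-0 output takes $O(n)$ time.

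For the second claim, the DFS of Figure~\ref{fig:dfs} moves from a parent node to a child by flipping exactly one additional (currently zero) reliable bit, say bit $\ell$. If at the parent node we store the current reliable pattern $\mbe^R$ and current unreliable pattern $\mbe^U = \mbe^R\mbA^\top \oplus {\mbs'}^\top$ (maintained incrementally along the search path), then the child's patterns are $\mbe^R \oplus \mbf_\ell$ and $\mbe^U \oplus \mbA_\ell$. This is exactly one XOR into a vector with support $\le 1+(n-k)\le n$, plus the $O(n)$ cost of forming $\pi^{-1}(\mu^{-1}(\cdot))$ if the full-length candidate is actually needed at that node; validity again comes for free from the algebraic form. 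So descending one edge in the DFS tree also costs $O(n)$, which is what we wanted.

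The only point needing a little care is the claim that the incremental data ($\mbA_\ell$, the permutations $\pi$, $\mu$, the matrix $\mbA$, and $\mbs'$) is available from the OSD$_4$-0 preprocessing at no per-flip cost; this is immediate because $\mbA$, $\pi$, and $\mu$ are computed once in steps~1--3 of Algorithm~\ref{alg:OSD0} and then reused, so the per-candidate work is purely the $O(n)$ XOR-and-permute described above. I do not expect a genuine obstacle here — the statement is essentially a bookkeeping lemma — but the main subtlety to get right is making sure the bound is $O(n)$ and not $O(n^2)$: this rests on the fact that a single reliable bit flip perturbs only one column-pattern $\mbA_\ell$ of length $n-k$ rather than requiring a fresh matrix--vector product $\tilde{\mbe}^R\mbA^\top$, which would cost $O((n-k)(n+k)) = O(n^2)$.
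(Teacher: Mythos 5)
Your proposal is correct and follows essentially the same route as the paper's proof: both identify that flipping the $\ell$-th reliable bit amounts to XOR-ing the OSD$_4$-0 output with the length-$2n$ vector $\bigl[\mbA_\ell^\top \;\; 0\cdots 0\; 1\; 0\cdots 0\bigr]$, where $\mbA_\ell$ is the corresponding column of $\mbA$, so each new candidate costs one sparse XOR plus the fixed permutations, i.e.\ $O(n)$. Your added remarks on maintaining the patterns incrementally along the DFS path and on why validity is automatic are consistent elaborations of the same argument rather than a different approach.
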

\begin{proof}

    Consider the vector $ \Big[\mbs'^\top\oplus\mbe^{R} \mbA^\top \, \Big| \tilde{\mbe}_{1}^{R}\, \tilde{\mbe}_{2}^{R}\, \cdots \, \tilde{\mbe}_{n+k}^{R}\Big]$
    generated by $\OSD_4$-0 in (\ref{eq:osd0}), where $\tilde{\mbe}_{j}^{R}$ are reliable bits. Suppose that the  matrix  $\mbA$ in  (\ref{eq:s'}) has columns $\mbA_{1}, \mbA_{2}, \dots, \mbA_{n+k}\in\{0,1\}^{n-k}$.
        Then according to~(\ref{eq:unreliable_subset}), flipping the bit $\tilde{\mbe}_{i}^{R}$ 
corresponds to adding the OSD$_4$-0 output by the vector 
\begin{align}\begin{bmatrix}
        	\mbA_j^\top & 0\cdots0 & 1&0\cdots 0
        \end{bmatrix}\in\{0,1\}^{2n}, \label{eq:flipping}
    \end{align}
        where $\mbA_j$ is the $j$-th column of $\mbA$,
        and generating the vector
    \begin{align}
            \left[\mbs'^\top \oplus \tilde{\mbe}^{R} \mbA^\top \oplus \mbA_i^\top \middle\vert \tilde{\mbe}_{1}^{R}\, \tilde{\mbe}_{2}^{R}\, \cdots \Big(1\oplus \tilde{\mbe}_{i}^{R}\Big) \cdots \, \tilde{\mbe}_{n+k}^{R} \right], 
    \end{align}
    which is a valid error candidate after permutations.
    This calculation  takes  $O(n)$ time.
\end{proof}

To efficiently navigate through these possibilities, we can utilize the technique described in Lemma \ref{lemma: flip a bit} combining with the DFS algorithm. Then all $\sum_{i=0}^w \binom{n+k}{i}$ candidates can be recursively generated.

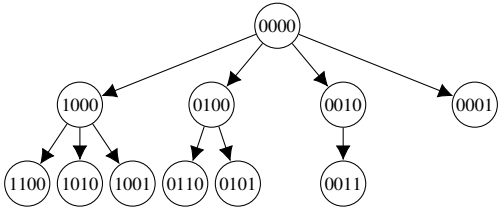
\begin{figure}
    \centering
\begin{tikzpicture}[scale=0.7,
 		level distance=1.5cm,
 		level 1/.style={sibling distance=2.5cm},
 		level 2/.style={sibling distance=1.0cm},
 		level 3/.style={sibling distance=0.5cm},
 		edge from parent/.style={draw, -{Latex[length=2mm, width=2mm]}, thin},
 		every node/.style={circle, draw, minimum size=6mm, inner sep=0}
 		]
 		\scriptsize
 		\node {0000}
 		child {node {1000}
 			    child {node {1100}}
 			child {node {1010}
 			}
     			child {node {1001}
 			}
 		}
 		child {node {0100}
 			child {node {0110}
 			}
                child {node {0101}
 			}
 		}	
 	child {node {0010}
                child {node {0011}}
 	}
        child {node {0001}
 	}
 	;
 	\end{tikzpicture}    \caption{ A tree with nodes consisting of all bit strings of length~4 and with at most weight 2. The root node is 0000. A child node is obtained by flipping one of the zero bits from its parent node that come after the last bit that is one.}
    \label{fig:dfs}
\end{figure}

\subsection{Complexity of OSD$_4$} \label{sec: complexity}

We analyze the complexity of  OSD based on $0$-ADD in the following.
The complexity of OSD$_4$-0 is dominated by the Gaussian elimination step, which is then $O(2n m^2) = O(n^3)$, assuming $m=O(n)$. For OSD$_4$-$w$ with $w > 0$, the total number of possible flips is $\sum_{j=0}^{w}\binom{n+k}{j}$, which can be expressed as $O((n+k)^w)$ or  simply $O(n^w)$.

In Lemma~\ref{lemma: flip a bit},   when we flip a bit in the reliable part, it requires only $O(n)$ calculations to find an error candidate. Additionally, by running through $\sum_{j=0}^{w}\binom{n+k}{j}$ possibilities using the DFS algorithm, the \mbox{OSD$_4$-$w$} algorithm has a complexity of $O(n^3+n^{w+1})$, which simplifies to $O(n^3)$ when $w \leq 2$.

  {
 \subsection{Reliable Subset Reduction combined with OSD}\label{sec:RSR_OSD}

 As introduced in Section~\ref{sec:rsr_decoding}, the RSR procedure removes highly reliable error variables   from the parity-check constraints, producing a reduced linear system. 
 This reduced system can then be solved using OSD or any other suitable combinatorial decoder to estimate the remaining unreliable variables. 
}

 Increasing the OSD order $w$ generally improves decoder performance. 
 In the extreme case, when $w=n+k$ and $\delta=n$, OSD$_4$-$(n+k)$ with $n$-ADD achieves maximum likelihood decoding. 
 However, larger $w$ comes at the cost of increased computational complexity.
 
 By reducing the problem size through RSR, we can lower the computational overhead in OSD, particularly at low error rates, as demonstrated in Section~\ref{sec:sim}. 
 Moreover, the reduced effective length allows the use of higher-order OSD under a fixed candidate budget. 
 For example, consider a candidate limit of $\Gamma$. 
 For the original problem, OSD$_4$-$2$ tests
 \begin{align}
 	\Gamma = \binom{n+k}{0} + \binom{n+k}{1} + \binom{n+k}{2}.
 	\label{eq:Gamma_osd_2}
 \end{align}
 If $v$ highly reliable bits are removed via RSR, the reduced system has $2n-v$ variables and rank $n-k-(m-m')$. 
 The maximum OSD order $w$ that respects the same candidate budget is then
 \begin{equation}
 	w = \max \left\{ x : \sum_{i = 0}^x \binom{n+k-v+(m-m')}{i} \leq \Gamma \right\}. \label{eq:w_eff}
 \end{equation}
 Even with the same number of candidates, the complexity is lower due to the reduced effective length.

\section{Approximate Degenerate OSD}\label{sec:ADOSD}

In this section, we show how code degeneracy can be exploited to prune the candidate set in OSD.  
Recall that in OSD$_4$-$w$, a list of
$
\sum_{i=0}^w \binom{n+k}{i}
$
candidates is generated by flipping subsets of reliable bits of the OSD$_4$-$0$ solution and adjusting the unreliable bits accordingly, and the best candidate is selected according to the $\delta$-ADD metric.

{
In quantum stabilizer codes, however, many of these candidates correspond to degenerate errors, i.e., they differ only by stabilizers and therefore have identical logical effects.  
Testing such degenerate candidates is unnecessary and can be avoided by tracking their logical syndromes.
}

{
Let $\mbL$ denote a binary logical matrix satisfying
$
\mbH \Lambda \mbL^\top = \mathbf{0},
$
so that the logical syndrome of an error $\mbe$ is given by
$
\mbL \Lambda \mbe^\top.
$

Let $\hat{\mbe}$ be the OSD$_4$-$0$ solution.  
As in (\ref{eq:s'}), after applying the same column permutation $\pi$, row operations $\rho$, and column permutation $\mu$ used in Gaussian elimination, the logical matrix becomes
\begin{align}
\mbL' = \mu\!\big(\rho(\pi(\mbL\Lambda))\big).
\end{align}

Flipping a single reliable bit   corresponds to adding the associated column of $\mbL'$ to the logical syndrome.  
Therefore, a reliable bit contributes to a nontrivial logical change if and only if the corresponding column of $\mbL'$ is nonzero.
}

 {
 \begin{theorem} [Degenerate candidate pruning] \label{thm:degeneracy_condition}
 	Consider an $[[n,k,d]]$ stabilizer code whose parity-check matrix has been transformed into the form
 	$
 	\mu(\rho(\pi(\mbH\Lambda))) =
 	\begin{bmatrix}
 		\mbI_{n-k} & \mbA
 	\end{bmatrix}
 	$
 	as in (\ref{eq:s'}), where the last $n+k$ columns correspond to the reliable bits in OSD$_4$-$0$.
 	Let $\mbL'$ be the transformed logical matrix defined above.
 	
 	\begin{enumerate}
 		\item In $n$-ADD OSD, only reliable bits whose corresponding columns in $\mbL'$ are nonzero need to be flipped when generating OSD$_4$-$w$ candidates.
 		
 		\item If all columns of $\mbL'$ corresponding to the reliable bits are zero, then all OSD$_4$-$w$ candidates are degenerate with the OSD$_4$-$0$ solution, and OSD$_4$-$w$ reduces to OSD$_4$-$0$ for any $w$.
 	\end{enumerate}
 \end{theorem}
}
 
 \begin{proof}
 	{
 	Flipping a reliable bit produces a new error candidate whose logical syndrome differs from that of $\hat{\mbe}$ by the corresponding column of $\mbL'$.  
 	In $n$-ADD, candidates that differ only by stabilizers are equivalent and need not be tested.
 }

 	1) Therefore, only reliable bits whose columns in $\mbL'$ are nonzero can generate candidates in different logical cosets, and only those flips must be considered.
 	
 	2) If all such columns are zero, then flipping any reliable bit leaves the logical syndrome unchanged, so all OSD$_4$-$w$ candidates are degenerate with $\hat{\mbe}$, and no higher-order OSD can improve upon OSD$_4$-$0$.
 \end{proof}

Exploiting  Theorem~\ref{thm:degeneracy_condition}-1) with  $n$-ADD
makes OSD$_4$-$w$ more effective. By incorporating with the RSR algorithm,
we can work on a   shortened check matrix $\tilde{\mbH}$ and higher order of OSD can be executed.
Consequently,  (\ref{eq:w_eff}) can be improved to
	$w = \argmax_x \sum_{i = 0}^x \binom{u}{i} \leq  \Gamma,$ 
where $u\leq n+k-v+(m-m')$ is the number of columns in the shortened check matrix 
corresponding to nontrivial logical operators.

However, the complexity of $n$-ADD is not practically feasible. We may use $\delta$-ADD by choosing $\delta$ as the highest weight of a set of low-weight stabilizer generators. For the case of a rotated surface code~\cite{BM07}, the stabilizer generators are of weight $2$ or $4$, and the next lowest stabilizer weight is~$6$. Thus, $n$-ADD can be well approximated by $4$-ADD at error rates lower than $1\%$. Assume that OSD$_4$-0 generates a list of error candidates $\mathcal{E}$. Then, $4$-ADD outputs:
 \begin{align}
    \hat{\mbe}=\argmax_{\mbe} \sum_{i=0}^{n-k}\Pr \big\{ \mbe+\mbH_i \big\}, \label{eq:4-ADD}
 \end{align}
where $\mathbf{S}_0 = I$ and $\{\mathbf{S}_i\}_{i=1}^{n-k}$ is a set of stabilizer generators for the rotated surface code. 
For practical implementation, let $W(x)$ be the (Pauli) weight enumerator of the set $\{ \mathbf{E} + \mathbf{S}_i : i = 0, \dots, n-k \}$, and then the calculation of (\ref{eq:4-ADD}) can be approximated by using the dominating terms in $W(\epsilon/(1-\epsilon))$ for error rate $\epsilon < 1\%$.

    For codes with algebraic structures, the first few lowest-weight terms can be derived, but determining the complete weight enumerator for a general code is an NP-hard problem. As an alternative, we propose selecting a set of $m \geq n-k$ low-weight stabilizer generators and calculating the weight distribution of ${m \choose c}$ stabilizers, where $c$ is a constant independent of $n$. This method provides a good approximation of the dominant terms for practical purposes. To maintain an overall complexity of $O(n^3)$, the value of ${m \choose c}$ should be chosen accordingly.

To achieve low computational complexity, OSD$_4$-$0$ is often used in practice.  
Theorem~\ref{thm:degeneracy_condition} shows that when $n$-ADD is employed, only flips that induce nontrivial logical syndromes need to be considered.  
However, this argument does not directly apply to OSD$_4$-$0$.
In $0$-ADD, candidates are compared solely by their weight.  
Multiplying an error by a stabilizer may change its weight while leaving its logical coset unchanged, and therefore a degenerate error can become a better candidate under $0$-ADD.  
As a result, in OSD$_4$-$0$ one cannot restrict bit flips solely based on logical syndromes as in Theorem~\ref{thm:degeneracy_condition}-1).  
Nevertheless, the degeneracy condition in Theorem~\ref{thm:degeneracy_condition}-2) still applies.

We therefore introduce a weaker but computationally efficient sufficient condition for pruning degenerate candidates.

 \begin{corollary} \label{cor:OSD0}
 Let $\tilde{S}\in\{0,1\}^{m'\times (2n-v)}$ be a shortened check matrix by the RSR algorithm. Suppose that   $\tilde{S}$ can be transformed into 
 $\begin{bmatrix}
     \mbI& \tilde{\mbA}
 \end{bmatrix}$ after Gaussian elimination and a column permutation,
 where $\tilde{\mbA}$ is associated with the reliable bits.
If all the columns of $\tilde{\mbA}$  are of weight less than $d-1$, then OSD$_4$-$w$ on the corresponding reduced linear system is equivalent to   OSD$_4$-0 for any $w$.

 \end{corollary}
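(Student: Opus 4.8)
The plan is to show that the hypothesis of Corollary~\ref{cor:OSD0} forces the premise of Theorem~\ref{thm:degeneracy_condition}-2) to hold on the reduced system, so that the conclusion follows immediately. Concretely, it suffices to prove that flipping any single reliable bit of the OSD$_4$-0 output on the reduced system changes the error candidate by a vector that, when read back as an $n$-qubit Pauli operator, is a stabilizer.

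First I would write the flip vector explicitly. Running OSD$_4$-0 on the shortened matrix $\tilde S$ transformed to $\begin{bmatrix}\mbI & \tilde{\mbA}\end{bmatrix}$, Lemma~\ref{lemma: flip a bit} and~(\ref{eq:flipping}) say that flipping the $j$-th reliable bit adds a length-$(2n-v)$ vector whose entries are the column $\tilde{\mbA}_j$ in the unreliable block and a single $1$ in the reliable block; its number of nonzero entries is therefore (weight of $\tilde{\mbA}_j$)$+1\le d-1$ under the assumption. Lifting to the full $2n$ coordinates by $\sigma^{-1}$, with the highly reliable block $\mbe^{H}$ padded by zeros (it is held fixed throughout OSD), produces a vector $\mbf\in\{0,1\}^{2n}$ with the same number of nonzero entries, since $\sigma$ is a permutation.

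Next I would argue that $\varphi^{-1}(\mbf)$ is a stabilizer. Any two candidates produced by OSD$_4$-$w$ on the reduced system match the reduced syndrome $\tilde{\mbs}$ in~(\ref{eq:reduced_syndrome}); lifting them with the fixed block $\mbe^{H}$ and using the check $\mbs^{H}=\mbC(\mbe^{H})^\top$ already verified in HRSR, the reassembly in Section~\ref{sec:HRSR} shows that both lifted vectors satisfy $\mbH\Lambda(\cdot)^\top=\mbs$. Hence $\mbf=\hat{\mbe}_1\oplus\hat{\mbe}_2$ satisfies $\mbH\Lambda\mbf^\top=\mathbf{0}$, i.e., $\varphi^{-1}(\mbf)\in N(\cS)$. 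Its Pauli weight is at most the number of nonzero entries of $\mbf$, hence strictly less than $d$, and $\mbf\neq\mathbf{0}$ because of the flipped coordinate. Since $d$ is by definition the minimum weight of a nontrivial logical operator, a non-identity element of $N(\cS)$ of weight below $d$ can only be a stabilizer, so every reliable-bit flip induces a stabilizer; and since $\cS$ is a group, flipping any subset of reliable bits induces a stabilizer as well. Applying the reasoning of Theorem~\ref{thm:degeneracy_condition}-2) to the reduced system then shows that every OSD$_4$-$w$ candidate is degenerate to the OSD$_4$-0 output, i.e., the two decoders are equivalent.

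The weight bookkeeping is routine; the step that needs care is the bridge between the HRSR-reduced system and the underlying $n$-qubit code, namely confirming that the lifted difference vector genuinely lies in $N(\cS)$ — not merely in the kernel of the shortened matrix — and is nonzero, so that the minimum-distance characterization can be invoked. This also explains why the sharp threshold is $d-1$ rather than $d$: the flipped reliable coordinate contributes an extra $1$ to the weight, so one needs the column weight of $\tilde{\mbA}$ strictly below $d-1$ to keep the induced Pauli weight strictly below $d$.
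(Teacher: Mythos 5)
Your proof is correct and follows the route the paper intends: bound the Hamming weight of the flip vector by $\wt{\tilde{\mbA}_j}+1<d$, conclude via the minimum-distance definition that the induced element of $N(\cS)$ must be a stabilizer, and then invoke Theorem~\ref{thm:degeneracy_condition}-2). The paper leaves this argument implicit (no proof is printed for the corollary), and your additional care in verifying that the lifted difference vector — padded with zeros on the highly reliable block — genuinely lies in the kernel of $\mbH\Lambda$ rather than merely of $\tilde{\mbH}$ is a worthwhile detail the paper glosses over.
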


 	 {
 	 \subsection{RSR + ADOSD}

If many bits are identified as highly reliable, the number of stabilizers associated with these bits would be high. Consequently, the shortened check matrix $\tilde{\mathbf{S}}$ from the RSR algorithm will have fewer rows and lower column weights.  Consequently, the sufficient condition in Corollary~\ref{cor:OSD0} is satisfied with high probability at low error rates, making higher-order OSD unnecessary.
}

 We combine RSR, degeneracy-aware pruning, and order-adaptive OSD into a unified approximate degenerate OSD (ADOSD) framework, as detailed in Algorithm~\ref{alg:ADOSD}, using $0$-ADD. 
   {Note that if decoding performance is of higher priority
   or if the problem size is small enough  (e.g., after RSR reduction), $\delta$-ADD can be utilized and can be integrated into Algorithm~\ref{alg:ADOSD}.
}

\begin{algorithm}[ht] \caption{ADOSD$_4$} \label{alg:ADOSD}
    \textbf{Input}:
  check matrix $\mbH\in\{0,1\}^{m\times 2n}$ of rank $n-k$, syndrome $\mbs\in\{0,1\}^{m\times 1}$, 
        BP output distributions $\{\mbq_i\}_{i=1}^n$, 
hard-decision reliability vector $\bm{\eta}\in\mathbb{Z}^n$,   
  belief distributions   $\{\mbq_i\}_{i=1}^n$,  soft reliability threshold $\theta$,
      hard-decision vector $\mbe\in\{0,1\}^{2n}$,
      maximum number of OSD flips  $\Gamma$, and   backup OSD order\,$w$.

    \textbf{Output}: A  valid error estimate $\hat{\mbe}\in\{0,1\}^{2n}$.

    {\bf Steps}:

      \begin{algorithmic}
      \State     Apply the RSR algorithm.  
  \If{ RSR returns failure,}
  \State  apply OSD$_4$-$w$  to   the original problem, and  \Return $\hat{\mbe}$;
  \Else  {} RSR returns $\mbs'$, $\tilde{\mbH}$, $\mbe^{R}$, and $\sigma$. 
  	Suppose $\tilde{\mbH}$ in reduced row echelon form  is $[I|\tilde{A}]$. 
   \If{all the columns of $\tilde{A}$   have weight less than $d-1$,} 
   {} $w \gets 0$;   
      \Else  {}  determine $w$ by (\ref{eq:w_eff}).
        \EndIf
\State     Use OSD$_4$-$w$ on $\mbs'$, $\tilde{S}$, and the given data to find an error estimate $\hat{\mbe}'\in\{0,1\}^{2n-v}$, and  
 \Return $\sigma^{-1}\left(\begin{bmatrix}
         \hat{\mbe}'&\mbe^{R}
         \end{bmatrix}\right)$.

         \EndIf
   
    \end{algorithmic}
     
 \end{algorithm}

{
\section{Extension to Circuit-Level Noise via Detector Error Models} \label{sec:DEM}
Our framework is not confined to code-capacity noise; it admits a natural extension to realistic circuit-level noise through detector error models, as we describe below.

At the circuit level, gates, measurements, and ancilla qubits are faulty, and quantum error correction typically involves multiple rounds of stabilizer syndrome extraction circuits. 
The decoding task is to infer the most likely error configuration given the circuit description and the observed measurement outcomes. 
This circuit-level decoding problem can be formulated as solving a linear system of equations~\cite{Pry20,KL24b}. 
Specifically, the \emph{error syndrome} of a location error is the set of measurement outcomes produced when only that error occurs in the syndrome extraction circuits, while the \emph{logical syndrome} is given by the commutation relations between the logical operators and the residual data-qubit error after circuit evolution. 
Collecting the error and logical syndromes of all basis location errors yields a circuit-level parity-check matrix. 
For an $n$-qubit stabilizer code, the resulting circuit-level decoding problem typically has size $O(n^3)$
using $O(\sqrt{n})$ syndrome extraction rounds.

STIM~\cite{Gid21stim} provides a detector error model (DEM) for a given syndrome-extraction circuit, which yields a binary circuit-level parity-check matrix $\mbH_{\mathrm{DEM}}$ and a corresponding binary logical matrix $\mbL_{\mathrm{DEM}}$. The error syndrome is the binary vector of detector variables that represent the syndrome difference between two consecutive rounds.
Moreover, location errors that produce identical error and logical syndromes are indistinguishable and are therefore merged by STIM into equivalence classes, resulting in a nonuniform probability vector $\mbp_{\mathrm{DEM}}$ that specifies the approximate probability of each error class. 
Since STIM is a Clifford circuit simulator, it records only commuting observables; in particular, either logical $X$ operators or logical $Z$ operators are monitored. 
In the following, we assume that logical $X$ operators are observed.

Let $N$ denote the number of distinct error classes. 
This leads to a binary decoding problem: let $\hat{\mbe}\in\{0,1\}^N$ denote the decoder output for an actual error $\mbe\in\{0,1\}^N$. 
The decoder output is classified as a \emph{logical $X$ error} if
\begin{align}
	\mbH_\mathrm{DEM} \Lambda (\mbe+\hat{\mbe})^\top &= \mathbf{0}, \label{eq:cl_H}\\
	\mbL_\mathrm{DEM} \Lambda (\mbe+\hat{\mbe})^\top &\neq \mathbf{0} \label{eq:cl_L}.
\end{align}

This abstraction maps a fault-tolerant stabilizer circuit to a binary parity-check matrix with independent error probabilities, which allows our decoder to be applied without modification. 
For our purposes, this binary decoding problem can be embedded into the quaternary stabilizer formalism by treating $\mbH_{\mathrm{DEM}}$ as $Z$-type stabilizers and $\mbL_{\mathrm{DEM}}$ as logical $Z$ operators, while neglecting commutation relations.

Importantly, circuit-level DEMs are extremely sparse and highly biased: most columns correspond to single-fault events with very small support.
As a result, when BP is applied to $\mbH_\mathrm{DEM}$, a large fraction of error variables become highly reliable, making RSR particularly effective.
In the low-noise regime, RSR typically reduces the effective system size by one to two orders of magnitude, transforming a nominal $O(n^3)$ circuit-level problem into a reduced system whose size scales roughly linearly with the code distance when error rate is low.
}

\section{Simulation results} \label{sec:sim}

We simulate the performance of the MBP$_4$+ADOSD$_4$ scheme on LCS codes~\cite{ORM24}, BB codes~\cite{BCG+24}, and various 2D topological codes, including toric, rotated surface, color, and twisted XZZX codes~\cite{Kit03,BM06,BM07,HFDM12,THD12,KDP11,ATBFB21} (see \cite[Table~I]{KL22isit}) {under the code-capacity noise model, and we further simulate rotated surface codes under a circuit-level noise model.}

The decoding performance of a quantum code $\cC(\cS)$ and a decoder is assessed through the \textit{logical error rate} (LER).
For a code with distance $d$, errors up to $t=\lfloor\frac{d-1}{2}\rfloor$ are correctable, while some errors of weight $t+1$ become uncorrectable. Consequently, at low physical error rates, the performance curve of a decoder follows a scaling behavior of $O(\epsilon^{t+1})$. This implies that in a log-domain plot, the decoder’s performance curve should exhibit an error floor with a slope of $t+1$.
A good decoder will have a low error floor. However, if a decoder fails to correct all errors of weight up to $t$, the error floor will have a slope lower than $t+1$. In the worst case, a flatter error floor may appear earlier than expected.

Each data point in the plots is based on at least 100--1000 logical error events in the low error rate regime.
At LER $10^{-6}$, this represents an error bar less than $2\times 10^{-7}$, which is roughly the size of the markers for the data points. Thus the error bars are neglected for clarity.

We use only the parallel schedule for the following simulations of MBP$_4$+ADOSD$_4$.
In contrast, \cite{DGMSV23} showed that a random-order schedule can improve BP decoding performance.
Accordingly, we implement the AMBP decoder with a random-order serial schedule, achieving better performance than the results reported in \cite{KL22}.

In ADOSD$_4$ simulations, $\Gamma$ is chosen according to (\ref{eq:Gamma_osd_2}).

\subsection{Parameter Selection and Tuning}
We use the MBP$_4$ decoder introduced in \cite{KL22} as a predecoder for OSD. MBP$_4$ features a parameter $\alpha$ that controls the step size of message updates during BP iterations: when $\alpha < 1$, the step size is enlarged; when $\alpha > 1$, the step size is reduced.
For $\alpha = 1$, MBP$_4$ is equivalent to the refined BP$_4$ in \cite{KL20}.
We will specify the value of $\alpha$ only when $\alpha \neq 1$ is used.
The optimization of $\alpha$ is discussed in Section~\ref{sec:alpha_opt}.

  {The maximum number of BP iterations, $\Tau$, is set to $100$ and the soft reliability threshold to $\theta = 0.999995$ for simulations under the code-capacity noise model to ensure high accuracy, 
		while for the circuit-level noise model, $\Tau$ is set to $10$ and $\theta = 0.99$ to prioritize computational efficiency.}

\subsubsection{Optimization of the parameter $\alpha$  in MBP$_4$ for ADOSD} \label{sec:alpha_opt}

In this subsection, we demonstrate how to optimize the parameter $\alpha$ of our MBP$_4$+ADOSD$_4$ scheme using LCS codes~\cite{ORM24}. 
For most nondegenerate codes, $\alpha = 1$ is sufficient to achieve good decoding performance. However, for degenerate codes, selecting $\alpha > 1$ can result in more stable message passing, leading to improved OSD performance. This optimization should be conducted for each   code and potentially for each   error rate, though a single value of $\alpha$ often applies across an entire code family.

LCS codes are a specialized type of quantum codes that combines the features of surface codes and lift-product constructions. Decoding LCS codes presents   the most challenging problem for AMBP$_4$ among all the quantum codes considered in this paper. We find that the AMBP$_4$ decoder~\cite{KL22} is ineffective for LCS codes, despite its superior performance over binary BP-OSD decoders~\cite{PK21,RWBC20} on all other codes. 

\begin{figure}[htbp]
	\centering
	\includegraphics[width=0.49\textwidth]{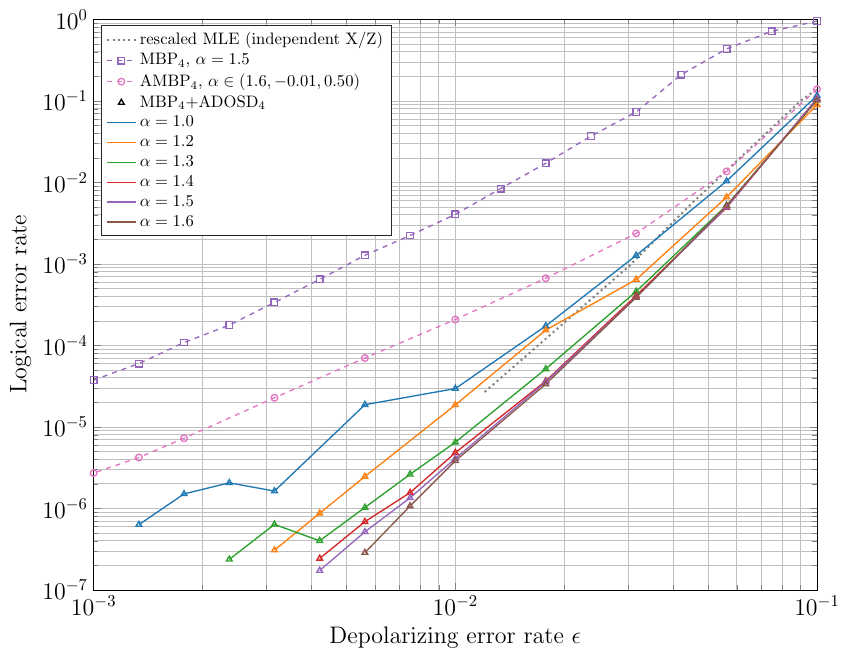}
	\caption{MBP$_4$+ADOSD$_4$ decoding with various values of $\alpha$ for the $[[175,7,7]]$ LCS code. The curve MLE is taken and rescaled from~\cite{ORM24}.
		The notation $\alpha\in(1.6,-0.01,0.5)$ means that $\alpha$ is tested in the sequence $1.60,1.59,1.58,\dots,0.51,0.50$ {with  $\Tau=100$ and $\theta=0.999995$.}}
	\label{fig:LCS175_alpha}
\end{figure}
Figure~\ref{fig:LCS175_alpha} compares various decoders on the $[[175,7,7]]$ LCS code, including an MLE (independent X/Z) reference curve from~\cite{ORM24}.
{
	To compare with the MLE curve under $X$-only noise, we rescale the physical error rate by a factor of $3/2$ so that the effective $X$ error probability matches that under depolarizing noise.
}
This MLE curve serves as a benchmark, as a decoder capable of correcting errors of weight up to $3$ should exhibit a performance curve parallel to it at low error rates.

The performance of MBP$_4$+ADOSD$_4$ on the $[[175,7,7]]$ LCS code, with varying values of $\alpha$, exhibits fluctuations at lower values of $\alpha$ but stabilizes as $\alpha$ increases  to 1.6. With $\alpha=1.6$, its performance curve indicates that it can correct nearly all weight-3 errors at LER $10^{-7}$.

 In~\cite{ORM24}, the BP-OSD-CS60 decoder~\cite{RWBC20} was shown to closely match the MLE curve on this code. Since the decoding of $X$ and $Z$ errors is performed independently, the MLE curve represents optimal performance under the assumption that $X$ and $Z$ errors are uncorrelated.
As MBP$_4$+ADOSD$_4$ leverages the $X/Z$ correlations, it significantly outperforms both the MLE curve and BP-OSD-CS60 by more than 15 times.
 In contrast, AMBP$_4$   encounters an error floor   for $\epsilon \leq 0.03$.

We observe a similar performance of MBP$_4$+ADOSD$_4$ for $[[15,3,3]]$, $[[65,5,5]]$ and $[[369,9.9]]$ LCS codes as well, while AMBP has an even higher error floor around LER $10^{-2}$ for the $[[369.9,9]]$ LCS code and a lower error floor around $10^{-5}$ for the $[[65,5,5]]$ LCS code.

\subsubsection{Robustness of the maximum iterations in MBP$_4$+ADOSD$_4$} \label{sec:iteration}

\begin{figure}[htbp]
    \centering
    \includegraphics[width=0.49\textwidth]{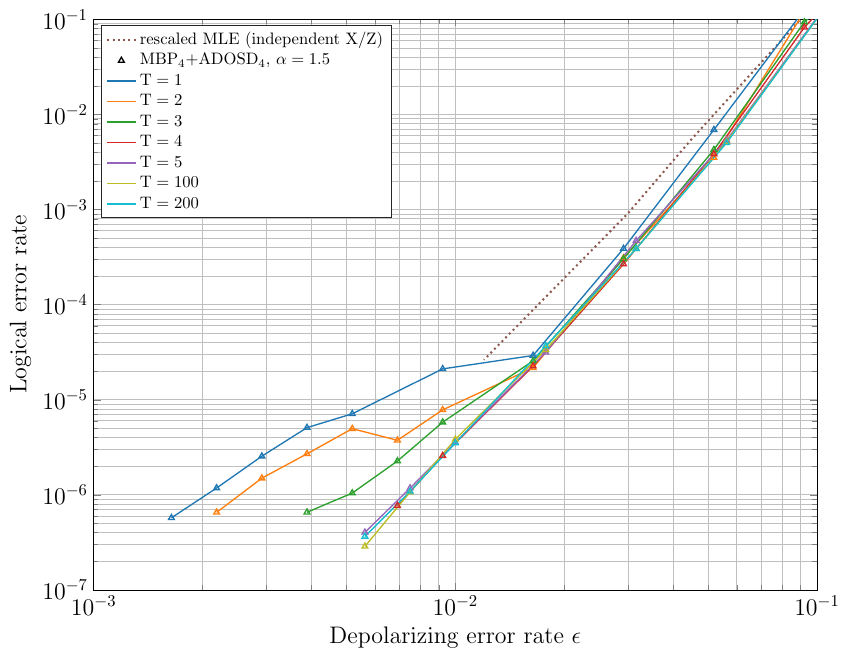}
    \caption{MBP$_4$+ADOSD$_4$ decoding with various maximum number of iterations $\Tau$ for the $[[175,7,7]]$ LCS code {with $\theta=0.999995$}.}
    \label{fig:LCS175_iter}
\end{figure}

Next, we demonstrate that the performance of ADOSD$_4$ remains robust even with variations in the maximum number of iterations in MBP$_4$. We note that in ADOSD$_4$, OSD-$w$ is invoked  when RSR fails. As a result, the overall algorithm maintains high accuracy.

Figure~\ref{fig:LCS175_iter} illustrates the performance of MBP$_4$+ADOSD$_4$ decoding for the [[175,7,7]] LCS code with varying maximum numbers of iterations, $\Tau =1,2,3, 4, 5,   100, 200$. The curve for $\Tau=4$ is already good enough and smooth.
While increasing the number of iterations results in slightly improved performance, the curves converge closely around an LER of $2 \times 10^{-7}$ to $3 \times 10^{-7}$.

\begin{table}[htbp]
	\begin{tabular}{|c|c|c|c|c|c|c|c|c|}
		\hline
		$\epsilon$& 0.1&0.056&0.032&0.018&0.01&0.007&0.0056\\
		\hline
		Avg. iter.& 194.46&97.67&23.51&6.31&2.44&1.70&1.23\\
		\hline
	\end{tabular}
	\caption{The average number of iterations for MBP$_4$  decoding on the $[[175,7,7]]$ LCS code with $\Tau=200$.}\label{tb:avg_iterLCS}
\end{table}

Table~\ref{tb:avg_iterLCS} presents the average number of iterations for $\Tau=200$. It can be observed that the average number of iterations rapidly decreases to one or two when $\epsilon<0.01$. Since we are targeting lower error rates, to balance accuracy and efficiency, we typically choose $\Tau=100$.

 \subsection{Code capacity noise model decoding simulations}
\subsubsection{Bivariate bicycle codes} \label{sec:BB_sim}

 BB codes are a class of QLDPC codes  proposed in~\cite{BCG+24}, featuring a high error threshold and a significantly higher code rate than topological codes. Several attempts at decoding this code family have been presented in~\cite{iOM24closed, GCR24} under the code capacity noise model. However, the BP+CB decoder in~\cite{iOM24closed} does not perform as well as BP-OSD-0. The BP+GDG decoder in~\cite{GCR24} shows improved performance, surpassing BP-OSD-0 and being comparable to BP2-OSD-CS10  over independent X errors,  as shown in \cite[Figure 4]{GCR24}.

Figure~\ref{fig:AMBP_BB} presents the simulation results of MBP$_4$+ADOSD$_4$ and AMBP$_4$ decoding for BB codes, both of which demonstrate significantly better performance than BP+GDG decoding~\cite{GCR24}. 
For instance, consider the [[144,12,12]] BB code. At an $X$ error rate of $0.02$ (corresponding to a depolarizing rate of about $0.03$), BP+GDG achieves an LER of approximately $10^{-4}$. In contrast, both MBP$_4$+ADOSD$_4$ and AMBP$_4$ achieve an LER below $4 \times 10^{-6}$ at the same depolarizing rate.
Moreover, MBP$_4$+ADOSD$_4$ outperforms AMBP$_4$.

For reference, we include dotted $O(\epsilon^{t+1})$ curves for each code. The $[[72,12,6]]$ and $[[90,8,10]]$ codes closely follow these reference curves, while the other three codes exhibit lower error floors below LER $10^{-7}$.

\begin{figure}
	\centering
	\includegraphics[width=0.49\textwidth]{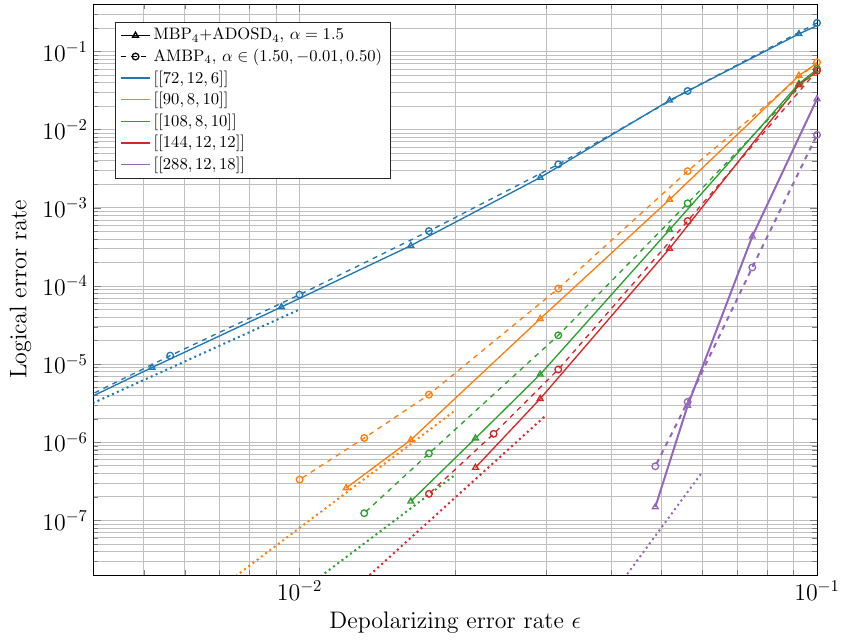}
	\caption{ AMBP$_4$ and MBP$_4$+ADOSD$_4$ decoding performance for various BB codes 
		{with  $\Tau=100$ and $\theta=0.999995$}.
}
	\label{fig:AMBP_BB}
\end{figure}

\subsubsection{2D topological codes}\label{sec:2D_sim}

\begin{figure}
	\centering
	 \includegraphics[width=0.49\textwidth]{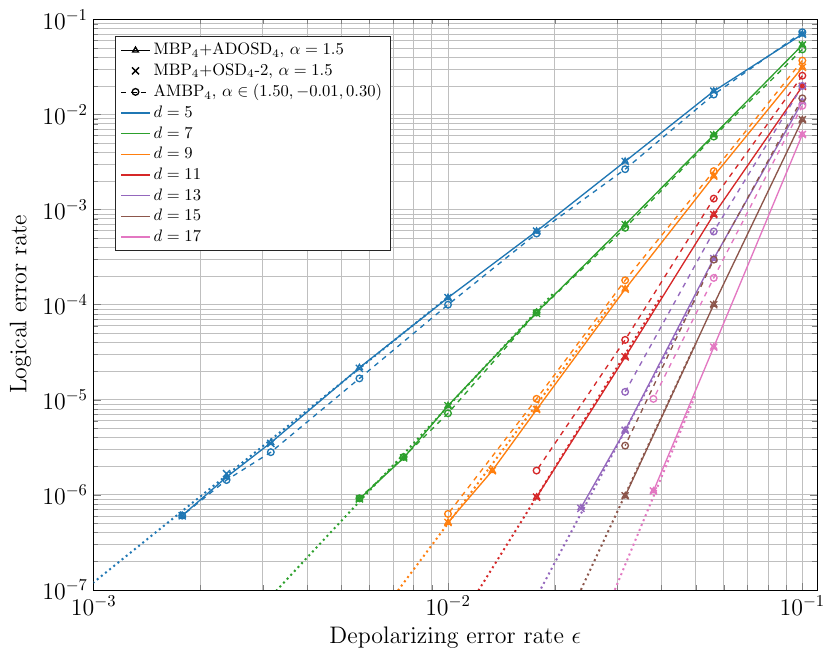}
	\caption{
	  MBP$_4$+OSD$_4$-2, MBP$_4$+ADOSD$_4$, and AMBP$_4$ decoding performance for various  $[[d^2,1,d]]$ rotated surface codes {with  $\Tau=100$ and $\theta=0.999995$}.
	}\label{fig:rot_surface}
\end{figure}

\begin{figure}
	\begin{subfigure}{0.49\columnwidth}
		\centering
		\centering\includegraphics[width=0.99\columnwidth]{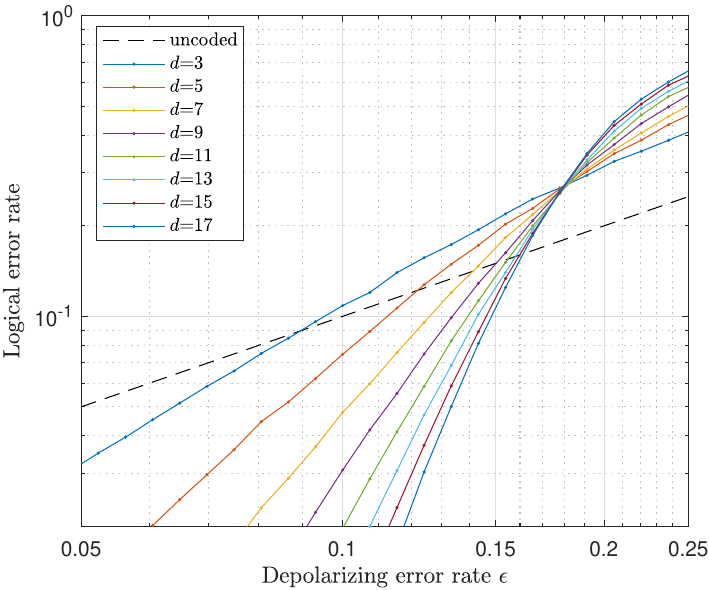}
		\caption{}
		\label{fig:3.6(a)}
	\end{subfigure}
	\hfill
	\begin{subfigure}{0.49\columnwidth}
		\centering
		\centering\includegraphics[width=0.99\columnwidth]{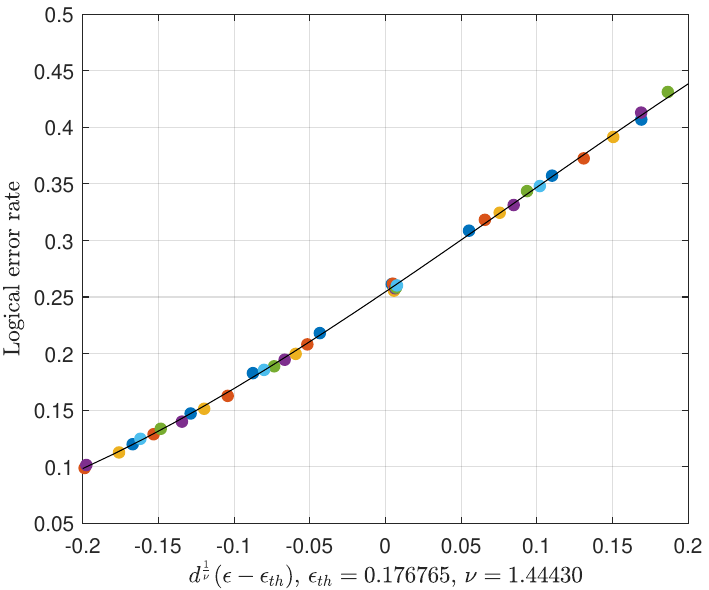}
		\caption{ }
		\label{fig:3.6(b)}
	\end{subfigure}

	\caption{
		(a) The threshold of MBP$_4$+OSD$_4$-2 on the $[[d^2,1,d]]$ rotated surface codes  {with $\alpha=1.0$, $T=100$, and $\theta=0.999995$} is about $17.67\%$ from the simulations. For each data point, we collected at least 10,000 logical error events. The dashed line stands for no error correction. 
		(b)  The accuracy of the threshold estimation about \BPOSD-2 on the rotated surface codes can be demonstrated using the critical scaling approach. The estimation is based on data from $d = 7$ to $d = 17$, ensuring that the results are not influenced by biased information.
	}\label{fig:surface_osd_thres}
	
\end{figure}

Here, we evaluate the decoding performance of our BP+OSD schemes on various 2D topological codes. Examining LER around $10^{-6}$ is important, as it helps identify whether an iterative decoder exhibits error floor behavior.

Figure~\ref{fig:rot_surface} presents the performance curves of MBP$_4$+ADOSD$_4$ and MBP$_4$+OSD$_4$-2 on $[[d^2,1,d]]$ rotated surface codes with distances ranging from 5 to 17. They exhibit  excellent performance, closely following the dotted reference curves.

The threshold analysis for $[[d^2,1,d]]$ rotated surface codes, conducted using the scaling ansatz method~\cite{WHP03,C21}, is shown in Figure~\ref{fig:surface_osd_thres}. The results show a threshold of approximately 17.67\% for BP$_4$+OSD$_4$-2, with MBP$_4$+ADOSD$_4$ exhibiting nearly identical performance. Notably, they outperform AMBP$_4$, particularly for larger $d$, resulting in a higher threshold compared to the ~16\% threshold of AMBP$_4$.

In summary, our decoding schemes demonstrate superior performance on surface codes compared to the MWPM decoder~\cite{Edm65,WFSH10}, both in terms of LER performance and error threshold (see, e.g.,~\cite{demarti2024decoding}).

Similar analyses are performed for toric, twisted XZZX, (6.6.6), and (4.8.8) color codes using both BP$_4$+OSD$_4$-2 and BP$_4$+ADOSD$_4$. The summarized threshold results are presented in Table~\ref{tb:AvsO_thres}. 
 ADOSD$_4$ and OSD$_4$-2  exhibit nearly identical performances on toric, surface, and twisted XZZX codes, with OSD$_4$-2 performing slightly better on the (6.6.6)  and (4.8.8) color codes.

 \begin{table}
 	\caption{Thresholds for 2D topological codes under depolarizing errors.
 		The parameters {are $\alpha=1.0$,  $T=100$, and $\theta=0.999995$} } \label{tb:AvsO_thres} 
 	\centering{
 		$\begin{array}{|l|l|l|l|}
 			\hline
 			\text{code family} 	& \text{MBP$_4$+OSD$_4$-2}     &\text{MBP$_4$+ADOSD$_4$}     & \text{AMBP$_4$}\\
 			\hline   
 			\text{rotated toric}        & 17.52\%     &17.52\%   & \approx 17.5\% \\
 			\text{rotated surface}      & 17.67\%    &17.67\%    & \approx 16\% \\
 			\text{(6.6.6) color}   & 15.41\%     &15.18\%   & \approx 14.5\% \\
 			\text{(4.8.8) color} & 15.09\%     &14.69\%  & \approx 14.5\% \\
 			\text{XZZX twisted} & 17.72\%     &17.72\%   & \approx 17.5\% \\
 			\hline
 		\end{array}$
 	}
 	\\[3pt]
 	\begin{flushleft}
 	\end{flushleft}
 \end{table}

    {
    	\subsection{Circuit-Level Surface Code Decoding}

    	We simulate circuit-level decoding of the $[[d^2,1,d]]$ rotated surface codes using DEM matrices generated by STIM~\cite{Gid21stim}. For a distance-$d$ code, we simulate $d$ rounds of syndrome extraction.
    	
    	All faulty locations are assumed to have the same error rate $\epsilon$. We use 
    	\texttt{stim.Circuit} with experiment     	 \texttt{surface\_code:rotated\_memory\_z} to generate the syndrome-extraction circuits, and convert the resulting DEM into the binary matrices $\mbH_{\mathrm{DEM}}$, $\mbL_{\mathrm{DEM}}$, and the probability vector $\mbp_{\mathrm{DEM}}(\epsilon)$, where $\mbp_{\mathrm{DEM}}(\epsilon)$ depends on the physical error rate~$\epsilon$.
    	
    	In the default setting, the first round of detector variables contains detectors for only the $\frac{d^2-1}{2}$ $Z$ stabilizers. The subsequent $d-1$ rounds contain detector variables for all $d^2-1$ $X$ and $Z$ stabilizers. In addition, there is a final round of detector variables for the $Z$ stabilizers, which can be regarded as noiseless syndrome measurements. As a result, the default STIM circuit for a $[[d^2,1,d]]$ rotated surface code yields a total of $d(d^2-1)$ parity checks.
    	
    	Note that decoders which treat $X$ and $Z$ errors separately, such as the standard MWPM decoder, do not make use of both types of detector variables. For comparison, we therefore also generate decoding problems in which the $X$-type detector variables are removed. The parameters of the resulting DEM decoding problems are summarized in Table~\ref{tb:DEM_parameters}.

Since the default DEM contains more detector variables and hence more syndrome bits, it produces a larger number of error classes, i.e., a larger $N$, because some $Y$ errors can be distinguished from $Z$ errors by the presence of $X$-type detector variables.

    }

{
\begin{table*}[htbp]
	\centering
	\footnotesize
	\begin{subtable}[t]{0.45\textwidth}
		\centering
		\begin{tabular}{|c|c|c|c|c|}
			\hline
			Distance & 3 & 5 & 7 & 9 \\
			\hline
			$N$ & 219 & 1677 & 5471 & 12705 \\
			\hline
			\# Checks & 24 & 120 & 336 & 720 \\
			\hline
		\end{tabular}
		\caption{DEM decoding parameters with both $X$ and $Z$ detector  variables.}
	\end{subtable}
	\hfill
	\begin{subtable}[t]{0.45\textwidth}
		\centering
		\begin{tabular}{|c|c|c|c|c|}
			\hline
			Distance & 3 & 5 & 7 & 9 \\
			\hline
			$N$ & 55 & 301 & 883 & 1945 \\
			\hline
			\# Checks & 16 & 72 & 192 & 400 \\
			\hline
		\end{tabular}
		\caption{DEM decoding parameters with only $Z$ detector variables.}
	\end{subtable}
	
	\caption{Parameters of the DEM decoding problems for rotated surface codes. Subtable (a) shows the default setting with both $X$ and $Z$ detector variables, and subtable (b) shows the modified setting with only $Z$ detector variables.}
	\label{tb:DEM_parameters}
\end{table*}
}

{
In our simulations, an error vector $\mbe$ is sampled according to $\mbp_{\mathrm{DEM}}(\epsilon)$
and its syndrome $\mbH_{\mathrm{DEM}} \mbe^\top$ is computed.
Given $\mbH_{\mathrm{DEM}}$, $\mbL_{\mathrm{DEM}}$, the measured syndrome $\mbH_{\mathrm{DEM}} \mbe^\top$, and $\mbp_{\mathrm{DEM}}(\epsilon)$, we apply MBP$_4$+ADOSD$_4$ to this binary decoding problem to obtain an estimate $\hat{\mbe}$.
The logical $X$ error rate is then estimated using (\ref{eq:cl_H}) and (\ref{eq:cl_L}).

We collected 1000 logical error events for each data point in the plots, except for the cases with
 $d=9$ and $N=12705$ at $\epsilon=0.002$ and $\epsilon=0.001$. 
}

{
For computational efficiency, the maximum number of BP iterations is set to $T=10$.
In particular, we use only soft reliability with threshold $\theta = 0.99$, without employing hard-decision history.
By lowering the reliability threshold, RSR produces a smaller reduced decoding problem.

The step size is fixed to $\alpha=1.5$, following the choice used in the code-capacity noise model without further optimization.

\begin{figure}
	\centering
	\includegraphics[width=0.49\textwidth]{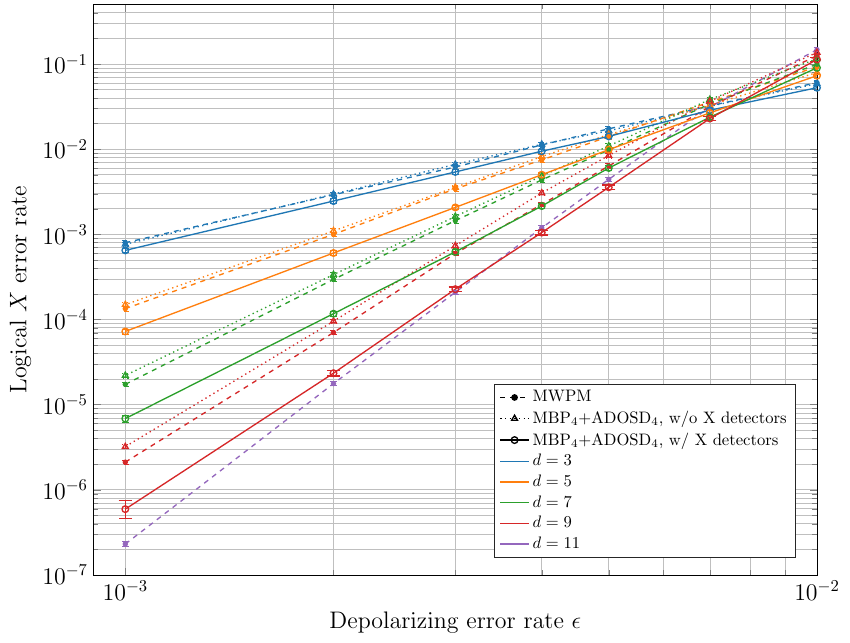}
	\caption{
		MBP$_4$+ADOSD$_4$ and MWPM decoding performance for various $[[d^2,1,d]]$ rotated surface codes
		with $\alpha=1.5$, $T=10$, and $\theta=0.99$ under the circuit-level noise model.
		Error bars represent 95\% confidence intervals.
	}
	\label{fig:rot_surface_cl}
\end{figure}

The simulation results are shown in Fig.~\ref{fig:rot_surface_cl}, where the MWPM decoder is implemented using PyMatching2~\cite{HG25} for comparison.
We observe that MWPM performs better than MBP$_4$+ADOSD$_4$ for DEM instances without $X$-type detector variables.
However, when $X$ detector variables are included, MBP$_4$+ADOSD$_4$ outperforms MWPM by exploiting the additional syndrome information.
Moreover, for higher physical error rates ($p > 0.003$), the $d=9$ MBP$_4$+ADOSD$_4$ curve outperforms the $d=11$ MWPM curve.
The error threshold is also improved to approximately $0.76\%$, compared with the $0.7\%$ reported for PyMatching2.

For reference, as shown in \cite{muller2025improved}, RelayBP performs worse than MWPM for rotated surface codes.
We also remark that MWPM can be adapted to exploit $X/Z$ correlations~\cite{Fow13,PF23,tian2025enhancing}.

Furthermore, we include a direct comparison with BP+LSD-0 from~\cite{hillmann2025localized} in Fig.~\ref{fig:rot_surface_d59}, where our decoder consistently achieves a lower logical error rate.
Although it is unclear whether~\cite{hillmann2025localized} uses the default DEM generated by stim, both of our performance curves (with and without $X$ detector variables) outperform BP+LSD-0.

\begin{figure}
	\centering
	\includegraphics[width=0.49\textwidth]{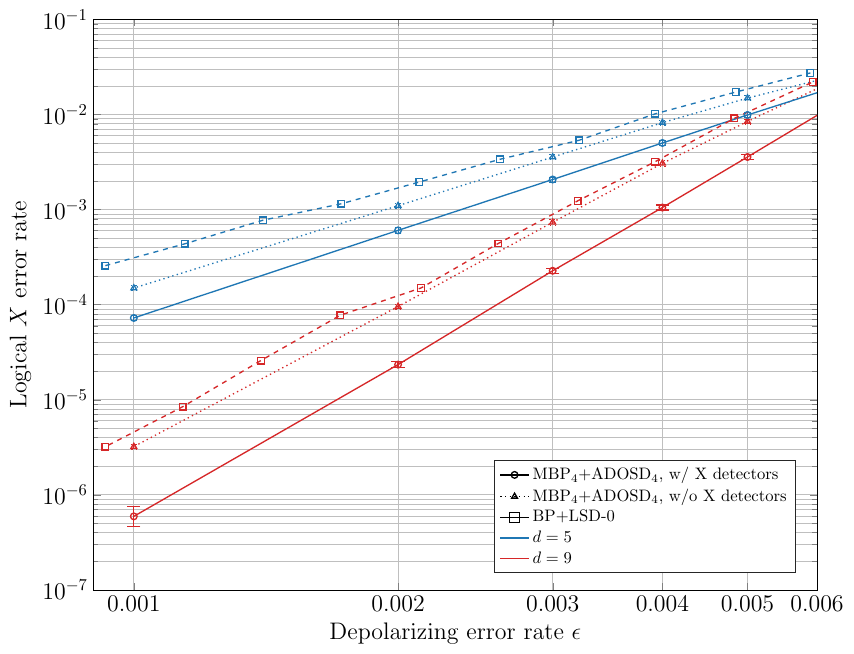}
	\caption{
		MBP$_4$+ADOSD$_4$ and BP--LSD-0 decoding performance for $[[d^2,1,d]]$ rotated surface codes
		with $\alpha=1.5$, $T=10$, and $\theta=0.99$ under the circuit-level noise model.
		The BP+LSD-0 curves are from~\cite{hillmann2025localized} and the algorithm uses a maximum of 30 BP iterations.
	}
	\label{fig:rot_surface_d59}
\end{figure}
}

\subsection{Decoding statistics}

We summarize the decoding statistics for circuit-level surface-code decoding of distance $9$ without $X$-type detector variables in Table~\ref{tb:statistics_d9}, using $\alpha=1.5$, $T=10$, and $\theta=0.99$.
{The row “Shots” reports the total number of Monte Carlo shots, while “OSD calls/Shots” is the percentage of shots for which BP does not converge to a valid solution and OSD is therefore invoked.
}
The row “Average length after RSR” shows that the effective code length after reduction is below $4\%$ of the original for $\epsilon \le 0.005$, and below $1\%$ for $\epsilon = 0.001$, demonstrating the strong reduction capability of RSR.

From the statistics, we observe no stage-1 failures and only a small percentage of stage-2 failures.
Even if all stage-2 failures are conservatively counted as logical errors, the overall logical error rate is only marginally affected.

The row “Avg iterations w/o OSD” corresponds to samples that are successfully decoded by BP alone without invoking OSD.
{Typically, only about $3$–$6$ BP iterations are required.}
This justifies setting the maximum number of BP iterations to $T=10$, which is sufficient for RSR to accurately identify reliable and unreliable bits.

For reference, in the reported simulations of Relay-BP-5, $R=601$ and $T_r=60$, resulting in more than $3.6\times10^4$ BP iterations~\cite{muller2025improved}.

Finally, we remark that the current software runtimes do not yet fully reflect the intrinsic speedup enabled by RSR. Our current research-oriented implementation embeds the binary DEM decoding problem into a quaternary formalism, which incurs unnecessary computational overhead. A dedicated binary implementation is expected to further reduce the wall-clock time.

%

 \begin{table}[ht]
 	\caption{Statistics of circuit-level surface decoding ($d=9, N=1945$) without $X$ detector variables. Parameters: $\alpha=1.5, T=10, \theta=0.99$. $n_{red}$ denotes the reduced variable length by RSR.}
 	\label{tb:statistics_d9}
 	\centering
 	\begin{tabular}{|l|c|c|c|c|c|c|}
 		\hline
 		Depolarizing rate $\epsilon$  &  0.01  & 0.007  &  0.005 & 0.003 & 0.002 & 0.001  \\
 		\hline
 		Shots & $10^6$ & $10^6$ & $10^6$ & $10^6$ & $10^7$ & $10^7$\\
 		OSD calls / Shots & $99.97\%$ & $96.85\%$ & $79.12\%$ & $36.88\%$ & $16.41\%$ & $3.90\%$ \\
 		\hline
 		Average length after RSR ($n_{red}$) & 229.53 & 113.27 & 62.34 & 32.85 & 22.67 & 14.99 \\
 		Relative code length ($n_{red}/N$) & 0.118 & 0.058 & 0.032 & 0.017 & 0.012 & 0.008 \\
 		\hline
 		Stage-1 Failures & 0 & 0 & 0 & 0 & 0 & 0 \\
 		Stage-2 Failures / Shots & $1.19\times 10^{-4}$ & $4.33\times 10^{-4}$ & $8.61\times 10^{-4}$ & $8.88\times 10^{-4}$ & $3.95\times 10^{-4}$ & $9.16\times 10^{-5}$ \\
 		\hline
 		Avg. BP iterations (converged) & 6.21 & 5.67 & 5.15 & 4.53 & 4.07 & 3.50 \\
 		\hline
 	\end{tabular}
 \end{table}

{
 The order-of-magnitude reduction in the residual problem size  ensures that the computational cost of the ADOSD stage remains comparable to only a few BP iterations. This low-complexity feature is consistent across both code-capacity and circuit-level noise models, facilitating the implementation of high-order OSD in large-scale quantum error correction systems.
}

{
\section{Related Work} \label{sec:related}
Our decoder essentially follows a BP+RSR+OSD pipeline: 
1) run BP; 
2) if BP does not converge to a valid codeword, use the reliability statistics generated by BP to perform RSR and obtain a reduced linear system; 
3) apply OSD to solve the reduced system. 
In this way, the postprocessing stage is significantly accelerated.

Related work such as Ambiguity Clustering (AC) \cite{WB24ambiguity} and Localized Statistics Decoding (LSD) \cite{hillmann2025localized} also aim to speed up the postprocessing stage. 
AC employs a divide-and-conquer strategy by decomposing the global decoding problem into multiple localized clusters. 
Each cluster is solved independently to determine its local logical effect, and the local solutions are then linearly combined to reconstruct a global logical correction.

LSD follows a cluster-growth paradigm inspired by the Union-Find (UF) decoder~\cite{DN17} to localize syndromes. 
It further uses linear-algebraic verification to govern cluster merging and syndrome satisfaction, allowing LSD to be applied to codes without a geometric structure.

Since AC and LSD can be viewed as high-efficiency alternatives to OSD for postprocessing, 
it is natural to integrate them into a unified pipeline such as BP+RSR+AC or BP+RSR+LSD. 
Note that RSR requires performing rank-revealing linear-algebraic operations on the reduced system.
Importantly, essentially the same class of linear-algebraic operations (e.g., pivoting, rank tests, and column selections) is also required by OSD, AC, and LSD in order to initialize clusters, perform algebraic verification, or generate candidate solutions.
Therefore, applying AC or LSD after RSR does not introduce a fundamentally new computational stage.

We further note that AC applies degenerate decoding when solving the local subproblems. 
Because these subproblems are much smaller, it becomes feasible to perform fully degenerate decoding by optimizing over logical cosets. 
This observation motivates the use of $\delta$-ADD with $\delta>0$ as a hard-decision rule in OSD after RSR has reduced the problem size sufficiently.
 
To summarize, degeneracy is exploited in our algorithm in two distinct ways.
(1) Since OSD-$w$ is a list decoder, we first use a degeneracy condition (Theorem~\ref{thm:degeneracy_condition} and Corollary~\ref{cor:OSD0}) to determine whether multiple logical cosets are possible, and hence whether generating a candidate list is necessary at all.
(2) Once a reduced problem has been obtained by RSR and a candidate list is generated, we apply the approximate degenerate decoding rule $\delta$-ADD to select the most likely logical coset from the OSD list.
  
}

\section{Conclusion} \label{sec:conclusion}

{
We proposed an RSR procedure that reduces the decoding problem size using BP-generated statistics with both hard- and soft-decision reliabilities.
RSR serves as a general-purpose preprocessor that enables post-processing decoders, including OSD, LSD, and AC, to operate efficiently on the reduced problems.

We further introduced a degeneracy-aware pruning condition to identify when higher-order OSD is beneficial.
Combined with MBP, this yields the MBP--RSR--ADOSD family of decoders, which achieves strong and consistent performance across a broad range of sparse quantum codes under the code-capacity noise model, including both CSS and non-CSS codes, as well as under circuit-level DEM noise for rotated surface codes.

A key practical result is that MBP--RSR--ADOSD$_4$ scales to decoding problems with more than $10^4$ error variables.
At low physical error rates (e.g., below $10^{-3}$), errors are typically sparse and isolated, so RSR produces a small effective problem size, making higher-order OSD both computationally feasible and highly effective in this regime.

More general circuit-level decoding problems with strongly correlated syndrome structures~\cite{KL24b} are left for future work.
}

 {
  The source files of our simulations can be found in \cite{Kung2026MBP_ADOSD}.
}

\bibliographystyle{IEEEtran}
\bibliography{myBib_v2.bib}

\end{document}